\newtheorem{theorem}{Theorem}[section]
\newtheorem{assu}{Assumption}[section]
\newtheorem{proof}{Proof}[section]
\newtheorem{remark}{Remark}
\newtheorem{definition}{Definition}[section]
\begin{document}

\begin{frontmatter}

\title{Data Driven Robust Estimation Methods for Fixed Effects Panel Data Models}

\author[first]{Beste Hamiye Beyaztas\corref{correspondingauthor}}
\cortext[correspondingauthor]{Corresponding author}

\author[second]{Soutir Bandyopadhyay}

\address[first]{Department of Statistics, Istanbul Medeniyet University, Istanbul, Turkey}
\address[second]{Department of Applied Mathematics, Statistics Colorado School of Mines, Golden, Colorado, USA}

\begin{abstract}
The panel data regression models have gained increasing attention in different areas of research including but not limited to econometrics, environmental sciences, epidemiology, behavioral and social sciences. However, the presence of outlying observations in panel data may often lead to biased and inefficient estimates of the model parameters resulting in unreliable inferences when the least squares (LS) method is applied. We propose extensions of the M-estimation approach with a data-driven selection of tuning parameters to achieve desirable level of robustness against outliers without loss of estimation efficiency. The consistency and asymptotic normality of the proposed estimators have also been proved under some mild regularity conditions. The finite sample properties of the existing and proposed robust estimators have been examined through an extensive simulation study and an application to macroeconomic data. Our findings reveal that the proposed methods often exhibits improved estimation and prediction performances in the presence of outliers and are consistent with the traditional LS method when there is no contamination.
\end{abstract}

\begin{keyword}
Panel data\sep Fixed effects\sep Robust estimation\sep M-estimation\sep Least squares
\MSC[2010] 62M10\sep 62F35 
\end{keyword}

\end{frontmatter}

\section{Introduction} \label{Sec:1}

Panel data refers to the two-dimensional data in which cross-sectional units are observed over time. Its grouping structure allows to reflect the nested phenomena so that the characteristics of cross-sectional units are entrenched over time and vice-versa (see, \cite{Bickel2007}). Over last several years, its increasing availability, demanding methodology and better ability to model the complexity of human behavior than a pure cross-section or time series data are the primary reasons behind the excessive growth in its study  (see,\cite{Hsiao2007}). For a comprehensive account of the technical details on linear panel data models, please refer to \cite{Baltagi2005}, \cite{Fitzmaurice2004}, \cite{Greene2003}, \cite{Maddala1973}, \cite{Mundlak1978}, \cite{Diggle2002}, \cite{Hill2007}, \cite{Wallace1969}, \cite{Wooldridge2002}, and the references therein. 

In recent years, the panel data regression models have received increasing attention in the research of different fields such as econometrics and biostatistics, since these models allow the unobserved individual-specific heterogeneity to be taken into account (cf. \cite{Baltagi2005} and \cite{Hsiao1985} for more details). The statistical appeal of panel data models relies on the fact that these focus particularly on explaining within variations over time and provide controls over individual heterogeneity. The most widely used regression techniques for making statistical inferences regarding the parameters of linear panel data regression models are typically based on the LS approaches. However, traditional estimation techniques require a number of quite restrictive and unrealistic assumptions such as the normality of error distribution, strict exogeneity with respect to the error terms and homoscedasticity of the error terms (cf. \cite{Kutner2004}, \cite{Greene2017}). Also, the panel data may include various types of outliers such as  either vertical, horizontal or leverage as noted in \cite{Rousseeuw1990} and \cite{Bakar2015}. Furthermore, the outlying observations may be concentrated in some blocks such that the fraction of contaminated values per one cross-sectional unit constitutes at least a half of the observations over time periods (cf. \cite{Bramati2007} and \cite{Aquaro2013}). Hence, the classical ordinary least squares (OLS) based methods may considerably be affected in the presence of data contamination and outliers caused by the measurement error, typing error, transmission/copying error and naturally unusual observations as noted in \cite{Rousseeuw2003}, \cite{Maronna2006} and \cite{Bakar2015}. As a result of this, the well-known estimators, such as within group LS estimators used in panel data models with fixed effects, often lead to unreliable estimates of the model parameters. In addressing the problem, more robust alternatives to LS methods having a high breakdown point (BP), such as Least Trimmed Squares (LTS) and S-estimators have been introduced by \cite{Rousseeuw1984} and \cite{RousseeuwYohai1984} for linear regression models. One of the various measures characterizing the robustness of the estimator is the breakdown point which measures the minimum proportion of the data that can significantly change the estimates (cf. \cite{Genton2003}, \cite{Davies2005} and \cite{Aquaro2013}). As pointed out by \cite{Rousseeuw1987}, the asymptotic breakdown point of the LS estimator is zero in the presence of contaminated data sets. Hence, for contaminated data sets, the importance using robust and positive breakdown point methods in estimating model parameters has been emphasized by many authors; see, for instance, \cite{Hampel1986}, \cite{Simpson1992}, \cite{Wagenvoort2002}, \cite{Maronna2006}, \cite{Cizek2008}. This is more crucial for panel data since the outlying data points may be masked and not be directly detectable using standard outlier diagnostics due to the complex structure of the data. 

In spite of the fact that building up the robust methods have been well-studied in estimation of the linear regression parameters, the number of available approaches on the robust methods for static panel data models is fairly limited (cf. \cite{Bramati2007},\cite{Aquaro2013}). A few different approaches within the robust estimation framework for the fixed effects panel data models have recently been proposed by \cite{Bramati2007}, \cite{Namur2011}, \cite{Aquaro2013}, \cite{Bakar2015}, \cite{Visek2015} and \cite{Midi2018}. \cite{Bramati2007} have proposed the robust alternatives to the within group LS estimator with positive breakdown point by extending well-known robust regression estimators, such as LTS estimator of\cite{Rousseeuw1984} and a combination of M and S estimates, namely, MS estimates of \cite{Maronna2000}. Another robust estimation approach has been proposed in \cite{Aquaro2013} based on two different data transformations (i.e. first-difference and pairwise-difference transformation) by applying the efficient weighted least squares estimator of \cite{Gervini2002} and the reweighted LTS estimator of \cite{Cizek2010} in the fixed effects linear panel data framework. \cite{Visek2015} has developed a robust algorithm by weighting down the large order statistics of squared residuals. Moreover, \cite{Bakar2015} have considered a robust centering method by employing the MM-centering procedure to the data. Then, the authors use the within group Generalized M Estimator (WGM) of \cite{Bramati2007} to estimate the parameters of fixed effect panel data model. More recently, a Weighted Least Square (WLS) procedure based on MM-centering method, which is highly resistant in the presence of leverage points and vertical outliers, has been proposed by \cite{Midi2018}. 

In this study, we concentrate on a class of robust estimators, i.e. M-estimators, instead of investigating different data transformations for fixed effects panel data model. A dispersion function (also called the \textit{loss function}), that varies at large values more slowly in comparison to the squared function of the residuals, is attempted to be minimized in M-estimation approaches. However, the robustness against outliers is achieved by some efficiency loss when unnecessarily resistance occurs. (e.g., \cite{Hampel1986}, \cite{Lindsay1994}, \cite{Wang2007} and \cite{Jiang2019}). Hence, it is critical to choose a dispersion function with an appropriate resistance level in obtaining efficient estimates of the parameters as noted in \cite{Wang2007}. To determine the necessary level of robustness, a tuning parameter (also called the regularization parameter) in the dispersion function requires to be selected appropriately based on the possible proportion of outliers (please, see \cite{Wang2007} and \cite{Wang2018} for more details). For some dispersion functions, such as Tukey's bisquare and Huber's functions, a data-driven procedure that automatically chooses the value of the tuning parameter has been introduced by \cite{Wang2007} and \cite{Wang2018} in the context of regression models. The main idea behind this method is to achieve desirable level of robustness against outliers without sacrificing estimation efficiency. Also, a data-dependent procedure to choose the value of tuning parameter in Exponential loss function has been proposed by \cite{Wang2013} in obtaining penalized robust estimators with maximum efficiency and asymptotic breakdown point $1/2$. In this paper, we adapt the M-estimation techniques with automatic selection of tuning parameter introduced by \cite{Wang2007} and \cite{Wang2013} to fixed effects panel data models to achieve resistant estimation against outliers as well as improving estimation efficiency. To construct robust alternative procedures based on Huber's, Tukey's bisquare and Exponential loss functions to the within group LS estimator, we use within group transformation on mean centered data to eliminate the individual effects. Thus, we do not provide the detailed explanations of the robust methods based on the different data transformations proposed by \cite{Aquaro2013}. The asymptotic properties of the proposed M-estimation methods are investigated by establishing the consistency and asymptotic normality in the context of fixed effects linear panel data models. Also, to investigate the finite sample performances of the proposed robust estimators, Monte Carlo experiments are carried out under various contamination schemes and two levels of contamination. Furthermore, the effects of different types of outliers including vertical outliers and leverage points on the proposed estimation procedures have been examined. The numerical results demonstrate that the proposed M-estimators based on data-driven procedures yield more accurate and precise estimates of the model parameters compared to within group LS estimator and within group MS estimator (WMS) of \cite{Bramati2007} for the increasing level of contamination, in general. Moreover, the predictive ability of the models constructed by estimating coefficients using our proposed methods is better than that of WMS method.

The remainder of the paper is organized as follows. In Section~\ref{Sec:2}, we present a detailed information on the fixed effects linear panel data models and within group LS estimator, followed by a discussion on the existing robust estimation procedures in Section~\ref{Sec:3}. In Section~\ref{Sec:4}, we describe our method to obtain the M-estimators based on Tukey's bisquare, Huber's and Exponential loss functions with a data-dependent tuning in the context of linear fixed effects panel data models. In Section~\ref{Sec:5}, we present the large sample properties of the proposed estimators. The finite sample performances of the proposed M-estimators are provided by an extensive simulation study and the results are compared with the robust WMS and traditional LS methods in Section~\ref{Sec:6}. In Section~\ref{Sec:7}, we apply proposed robust methods to real macroeconomic data. Finally, Section~\ref{Sec:8} concludes the work with a few remarks.

\section{Fixed Effects Panel Data Models} \label{Sec:2}
A linear fixed-effects panel data model with a random sample $\left\lbrace \left( y_{it}, x_{it}, \alpha_i \right), i = 1, \cdots, N, t = 1, \cdots, T \right\rbrace$ can be represented as
\begin{equation} \nonumber
y_{it} = x_{it}^T \beta + \alpha_i + \varepsilon_{it}, 
\end{equation}
where $y_{it}$ denotes the response variable, the $K$-dimensional random explanatory variables are denoted by $x_{it}$'s and $\beta \in \mathbb{R}^K$ are the vector of regression parameters. The subscript $i$ denotes the individuals observed over time periods $t$. Finally, the $\alpha_i$'s and $\varepsilon_{it}$'s respectively represent the unobservable individual specific effects and the independent and identically distributed (i.i.d.) error terms with $E \left( \varepsilon_{it} \vert x_{i1}, \cdots, x_{iT}, \alpha_i \right) = 0$, $E \left( \varepsilon_{it}^2 \vert x_i, \alpha_i \right) = \sigma_{\varepsilon}^2 I_T$ where $I_T$ is the identity matrix and $E \left( \varepsilon_{it} \varepsilon_{is} \vert x_i, \alpha_i \right) = 0$ for $t \neq s$. For simplifying notation, the panel data regression model can be expressed in more compact form, by stacking observations over time and individuals, as in Equation~(\ref{Eq:1}) given below.
\begin{equation} \label{Eq:1}
y = \alpha \otimes e_T + \mathbf{X} \beta + \varepsilon,
\end{equation}
where $y = \left( y_{1}, \cdots, y_{N} \right)^T$ and $\mathbf{X} = \left( x_1, \cdots, x_N \right)^T$, respectively, are  an $NT \times 1$ vector with $y_i = \left( y_{i1}, \cdots, y_{iT} \right)$ and an $NT \times K$ matrix of regressors with $\mathbf{X}_i = \left( x_{i1}^T, \cdots, x_{iT}^T \right)^T$, $\alpha$ is an $N \times 1$ vector consisting of the individual effects $\alpha_i$ for $i = 1 \cdots N$, $e_T$ is a $T \times 1$ vector of ones and $\otimes$ denotes the Kronecker product. 

An important advantage of the fixed effects model is the elimination of the individual effects from the model equation when estimating the parameter vector, $\beta$. The within group transformation removes the fixed effects by using the time averages of $y_{it}$, $x_{it}$ and $\varepsilon_{it}$ for each-cross sectional unit: $\bar{y}_{i.} = T^{-1} \sum_{t = 1}^T y_{it}$, $\bar{x}_{i.} = T^{-1} \sum_{t = 1}^T x_{it}$ and $\bar{\varepsilon}_{i.} = T^{-1} \sum_{t = 1}^T \varepsilon_{it}$. Then, the within group transformed model for the mean-centered data is obtained as follows.
\begin{equation} \nonumber
\ddot{y}_{it} = \ddot{x}_{it}^T \beta + \ddot{\varepsilon}_{it}
\end{equation}
where $\ddot{y}_{it} = y_{it} - \bar{y}_{i.}$, $\ddot{x}_{it} = x_{it} - \bar{x}_{i.}$ and $\ddot{\varepsilon}_{it} = \varepsilon_{it} - \bar{\varepsilon}_{i.}$. Under the assumptions of fixed effects models, the within group LS estimator of $\beta$, $\hat{\beta}_{\left( LS \right)}$ is obtained by running regression of $\ddot{y}_{it}$ on $\ddot{x}_{it}$ using OLS method, as follows.
\begin{equation} \nonumber
\hat{\beta}_{\left( LS \right)} = \left( \sum_{i=1}^N \sum_{t=1}^T \ddot{x}_{it}^T \ddot{x}_{it} \right)^{-1} \left( \sum_{i=1}^N \sum_{t=1}^T \ddot{x}_{it}^T \ddot{y}_{it} \right)
\end{equation}

The within group estimator fulfills three equivariance properties with respect to scale, regression and affine transformations since it is linear. Suppose $R \left( \lbrace x_{it}, y_{it} \rbrace \right)$ denotes a panel regression estimator as a function of data. 

\begin{definition} \label{def:1}
If, for any constant $c \in \mathbb{R}$, 
\begin{equation} \nonumber
R \left( \lbrace x_{it}, cy_{it} \rbrace \right) = c R \left( \lbrace x_{it}, y_{it} \rbrace \right)
\end{equation}
the estimator $R$ is scale equivariant.
\end{definition}
\begin{definition} \label{def:2}
If, for any $K \times 1$ vector of constants $\nu$, 
\begin{equation} \nonumber
R \left( \lbrace x_{it}, y_{it} + x_{it}^T \nu \rbrace \right) = R \left( \lbrace x_{it}, y_{it} \rbrace \right) + \nu
\end{equation}
it is regression equivariant.
\end{definition}
\begin{definition} \label{def:3}
If it follows for non-singular $A \in \mathbb{R}^{K \times K}$
\begin{equation} \nonumber
R \left( \lbrace A^T x_{it}, y_{it} \rbrace \right) = A^{-1} R \left( \lbrace x_{it}, y_{it} \rbrace \right)
\end{equation}
then, this estimator satisfies the affine equivariance property.
\end{definition}

The within group LS estimator defined above is known to be highly sensitive to the presence of outliers and aberrant observations. The motivation of this work is to construct robust estimation procedures which are less sensitive in the presence of outliers and erroneous observations. In this paper, we propose extensions of the data-dependent approaches proposed by \cite{Wang2007}, \cite{Wang2013} and \cite{Wang2018} for obtaining robust and more efficient estimates in fixed effects panel data models. This study considers three approaches: the first one is based on exponential squared loss (ESL) function suggested by \cite{Wang2013} to improve the robustness of variable selection procedures in context of penalized regression methods. Also, they propose a class of penalized robust regression estimators based on ESL and a data-driven procedure, which allows to select a tuning parameter depending on the proportion of outliers in the data. These procedures yield highly robust and also, efficient estimates by achieving the highest asymptotic breakdown point of $1/2$. In other approaches, we propose to use Huber's function and Tukey's bisquare function with data-dependent regularization parameters to obtain  more efficient estimates within the fixed effects models framework. \cite{Wang2007} and \cite{Wang2018} have proposed a data-driven method for the automatic selection of a tuning constant that can be applied to the various dispersion functions such as the Huber, Tukey's bisquare and ESL functions for regression models. They obtain considerably better results by improving the efficiency in estimating the regression parameters. Next, we briefly discuss the existing robust estimators and describe our proposed method to estimate the fixed effects panel data models.

\section{Robust Estimation for Fixed Effects Panel Data Models} \label{Sec:3}

The existing literature regarding the robustness of estimators for static fixed effect panel data models is fairly limited. For the purpose of constructing highly robust procedures, \cite{Bramati2007} propose the WGM estimator and the WMS estimator, which have asymptotically breakdown point of $1/4$. In the first approach, \cite{Bramati2007} suggest robustly centring the variables by using the within group medians (as described in Equation~(\ref{Eq:2})).
\begin{eqnarray} \label{Eq:2}
\tilde{y}_{it} &=& y_{it} - \underset{t}{\mathrm{median}}\left(y_{it}\right) \\ \nonumber
\tilde{x}_{it} &=& x_{it} - \underset{t}{\mathrm{median}}\left(x_{it}\right)
\end{eqnarray} 
Next, LTS regression is suggested to employ on the centered data to obtain initial estimates. Finally, a weighted LS estimation, which uses Tukey's bisquare function with the fixed tuning parameter $c=4.685$ and a multivariate S-estimator for down-weighting of leverage points, is performed to construct the WGM estimator. Although the WGM estimator can achieve the breakdown point up to $1/4$, it has crucial limitations of not being regression and affine equivariant because of the non-linearity of median transformation.

To construct the WMS estimator, the fixed effects are first estimated by considering them as regression coefficients. Then, the MS regression estimator of \cite{Maronna2000} can be implemented to the panel data, which uses M-estimators and S-estimators for the discrete explanatory variables and the continuous ones, respectively. It has been noted that the WMS, which has the advantages of being regression and affine equivariant, and the WGM procedure yield quite similar efficient estimation asymptotically. Thus, we compare our proposed procedures with the WMS estimator by considering its superiority over the WGM estimator.

Recently, \cite{Aquaro2013} propose robust estimation procedures based on the pairwise difference transformations by applying the efficient weighted LS estimator (REWLS) of \cite{Gervini2002} and the reweighted LTS (RLTS) estimator of \cite{Cizek2010}. Since the novelty of their approaches is to use different data transformations, the authors demonstrate the equivariance, robust, and asymptotic properties of the proposed estimators. The finite sample performances of the WGM and WMS estimators of \cite{Bramati2007} and the REWLS and RLTS estimators of \cite{Aquaro2013} are similar and they have equal breakdown points according to a given data transformation, see \cite{Aquaro2013}. Hence, the numerical results discussed in Section \ref{Sec:5} do not include the results related to the performances of the methods proposed by \cite{Aquaro2013}.

\section{New Robust Estimators Using Well-Known Loss Functions with Automatic Selection of Tuning Parameter} \label{Sec:4}

A broad class of robust estimation methods is the M-estimation approach which is based on minimizing dispersion function that more slowly varies at large values compared to the squared function of the residuals (see \cite{Wang2007} for details). In the M-estimation approach, the robustness is achieved by sacrificing some of efficiency in presence of unnecessarily excess resistance (e.g., \cite{Hampel1986}, \cite{Lindsay1994}, \cite{Wang2007} and \cite{Jiang2019}). To obtain necessary degree of robustness, a tuning parameter in the dispersion function requires to be chosen appropriately depending on the possible proportion of outliers, as noted in \cite{Wang2007} and \cite{Wang2018}. To this end, for a specific family of dispersion functions, a data-driven approach that automatically selects the value of the tuning constant has been introduced by \cite{Wang2007} in the context of regression models. The main idea underlying this method is to achieve desirable robustness level in presence of outliers without loss of estimation efficiency. Moreover, \cite{Wang2018} extend this procedure for linear regression models with autoregressive errors in analysing water quality data.

In this study, we focus on the robust estimation approaches based on loss functions with automatic selection of tuning parameter proposed by \cite{Wang2007} and \cite{Wang2013} to achieve resistant estimation against outliers and improve estimation efficiency in fixed effects panel data models. To construct alternative procedures to the methods discussed in Section \ref{Sec:3}, we use within group transformation for mean centered data to eliminate the individual effects. Thus, we do not provide the numerical results and the detailed explanations of the robust methods based on the different data transformations proposed by \cite{Aquaro2013}.

The M-estimation approach relies on minimization of a loss function, $\rho(\cdot)$, instead of minimizing sum of squared errors since the LS method is highly sensitive to the distortions caused by outliers (\cite{Rousseeuw2003}). The Huber and Tukey's bisquare functions, which are the most commonly used loss functions in robust regression, are defined as in Equations~(\ref{Eq:3})-(\ref{Eq:4}), respectively. (see, \cite{Wang2018}) 
\begin{itemize}
\item Huber's function 
\begin{equation} \label{Eq:3}
\rho(u) = \left \{\begin{array}{ll}
      \frac{u^2}{2} & \text{if}\;\lvert u \rvert \leqslant c \\
      c\lvert u \rvert - \frac{c^2}{2} & \text{if}\;\lvert u \rvert > c
    \end{array}
  \right.
\end{equation} 
\item Tukey's bisquare function 
\begin{equation} \label{Eq:4}
\rho(u) = \left \{\begin{array}{ll}
      1 - \left\lbrace 1 - \left( \frac{u}{c} \right)^2 \right\rbrace^3 & \text{if}\; \lvert u \rvert \leqslant c\\
      1 & \text{if}\;\lvert u \rvert > c
    \end{array}
  \right.
\end{equation} 
\end{itemize}
Also, \cite{Wang2013} propose to use ESL function to obtain a class of penalized robust regression estimators with asymptotic breakdown point of $1/2$ for variable selection, and it can be expressed as follows. 
\begin{itemize}
\item ESL function
\begin{equation} \nonumber
\rho(u) = 1 - exp \left\lbrace - \left( \frac{u^2}{c} \right) \right\rbrace
\end{equation} 
\end{itemize}
Then, the robust estimator of $\beta$ is obtained as the solution of the following estimating functions by minimizing the loss functions $\sum_{i=1}^{N} \sum_{t=1}^{T} \rho \left\lbrace \frac{\left( y_{it} - x_{it}^T \beta - \alpha_i \right)}{\hat{\sigma}} \right\rbrace$ for a given estimate of the scale parameter, $\hat{\sigma}$ 
\begin{equation} \nonumber
U(\beta) = \sum_{i=1}^{N} \sum_{t=1}^{T} x_{it} \psi \left\lbrace \frac{\left( y_{it} - x_{it}^T \beta - \alpha_i \right)}{\hat{\sigma}} \right\rbrace = \underset{K \times 1}{\left( \mathbf{0} \right)}
\end{equation}
where $\psi$ denotes the sub-gradient function of $\rho\left( \right)$. It can be expressed as in Equations~(\ref{Eq:5})-(\ref{Eq:7}), for Huber's, Tukey's bisquare and ESL functions, respectively. 
\begin{itemize}
\item Huber's function
\begin{equation} \label{Eq:5}
\psi(u) = \left \{\begin{array}{ll}
      u & \text{if}\;\lvert u \rvert \leqslant c\\
      sign\left( u \right) c & \text{if}\; \lvert u \rvert > c
    \end{array}
  \right.
\end{equation} 
\item Tukey's bisquare
\begin{equation} \label{Eq:6}
\psi(u) = \left \{\begin{array}{ll}
      u \left\lbrace 1 - \left( \frac{u}{c} \right)^2 \right\rbrace^2 & \text{if}\;\lvert u \rvert \leqslant c\\
      0 & \text{if}\;\lvert u \rvert > c
    \end{array}
  \right.
\end{equation} 
\item The ESL function
\begin{equation} \label{Eq:7}
\psi(u) = u exp\left\lbrace - \left( \frac{u^2}{c} \right) \right\rbrace
\end{equation} 
\end{itemize}

Let us consider the residuals $e_{it} = \ddot{y}_{it} - \ddot{x}_{it}^T \hat{\beta}_{\left( LS \right)}$ when OLS regression is performed on the within group transformed data. Then, the estimating equation can be rearranged as follows 
\begin{eqnarray} \label{Eq:8} 
U(\beta) &=& \sum_{i=1}^{N} \sum_{t=1}^{T} \ddot{x}_{it} \psi \left\lbrace \frac{\left( \ddot{y}_{it} - \ddot{x}_{it}^T \beta \right)}{\hat{\sigma}} \right\rbrace \\ 
&=& \sum_{i=1}^{N} \sum_{t=1}^{T} \omega_{it} \ddot{x}_{it} \overline{e}_{it} \nonumber
\end{eqnarray}
where $\omega_{it} = \psi \left( \overline{e}_{it} \right)/\overline{e}_{it}$ denote the weights obtained by the weighting function $W(e) = \rho^{\prime}(e)/e$ and $\overline{e}_{it} = \left( \ddot{y}_{it} - \ddot{x}_{it}^T \beta \right) / \hat{\sigma}$. The $U$ defined in Equation~(\ref{Eq:8}) has a form of score functions with the weights, $\omega_{it}$. Thus, our proposed robust estimators can be obtained as follows
\begin{equation} \nonumber
\hat{\beta}_M = \left( \sum_{i=1}^N \sum_{t=1}^T \ddot{x}_{it}^T \omega_{it} \ddot{x}_{it} \right)^{-1} \left( \sum_{i=1}^N \sum_{t=1}^T \ddot{x}_{it}^T \omega_{it} \ddot{y}_{it} \right).
\end{equation}

Since $W$ is a function of $\beta$ and $\sigma$, an iterative procedure has been followed at the $k$-th iteration based on the previous $\hat{\beta}^{\left(k-1\right)}$ to determine the weights, $\hat{\omega}_{it} = \omega_{it} \rvert_{\beta = \hat{\beta}^{\left(k-1\right)}}$, . The M-estimation method assumes that $E \psi \left( \varepsilon_i \right) = 0$ and $E \psi^2 \left( \varepsilon_i \right) = \sigma_{\psi}^2$. Let's also suppose that
\begin{equation} \nonumber
b = \frac{\partial E \psi \left( \varepsilon_{it} + \delta \right)}{\partial \delta} \biggr\rvert_{\delta = 0}. \nonumber
\end{equation}
Under the finite variance assumption with some regularity conditions defined in detail in Section \ref{Sec:5} and $E \lvert \psi \left( \varepsilon_i + \delta \right) - \psi \left( \varepsilon_i \right) \rvert^2 = o(1)$  as $\delta \rightarrow 0$, the consistent estimator $\hat{\beta}_M$ is obtained by solving $U(\beta) = 0$ and it follows that
\begin{equation} \nonumber
\sqrt{N}\left( \hat{\beta}_M - \beta \right) \sim N \left(0, \tau^{-1} \sigma^2_{\varepsilon} \left[ E \left( \ddot{\mathbf{X}}_{i}^T \ddot{\mathbf{X}}_{i} \right) \right]^{-1} \right),
\end{equation}
where $E \left( \ddot{\mathbf{X}}_{i}^T \ddot{\mathbf{X}}_{i} \right)$ is estimated by $N^{-1} \sum_{i=1}^N \ddot{\mathbf{X}}_{i}^T \ddot{\mathbf{X}}_{i}$  (or $N^{-1} \sum_{i=1}^N \sum_{t=1}^T \ddot{x}_{it}^T \ddot{x}_{it}$) and $\tau = b^2 / \sigma_{\psi}^2$ denotes the scalar efficiency factor. The subgradient function $\psi$, which has large values of the efficiency factor $\tau$, lead to some gain in efficiency of the estimator $\beta$ as noted in \cite{Wang2007}. Thus, the loss function $\rho \left( \cdot \right)$ with the largest value of $\tau$ should be chosen to obtain more efficient estimates of the parameters. 

As noted in \cite{Bramati2007} and \cite{Wang2007}, the choice of constant, $c$ may have a great impact to achieve a good trade-off between efficiency and robustness degree. For example, in constructing the WGM estimator proposed by \cite{Bramati2007}, the value of $c$ is chosen as $4.685$ where Tukey's bisquare function is used. For Huber's function, its default value in the \texttt{R} package \texttt{rlm} equals to $1.345$ to obtain about 90\% efficiency when the data follow Normal distribution, see \cite{Wang2007}. In traditional robust procedures, the value of $c$ for any loss function must be predetermined by taking into account the desirable level of robustness. However, the efficiency of the estimators varies significantly with the different choices of $c$. Therefore, the value of tuning parameter $c$ requires to be chosen depending on the proportion of the outlying points in the data or distribution of the data to maximize the estimation efficiency. Thus, in this study, best value of the tuning parameter $c$ is chosen by maximizing the value of $\tau$, and the nonparametric estimate of $\tau$, proposed by \cite{Wang2018}, is calculated by 
\begin{equation} \nonumber
\hat{\tau} \left( c \right) = \frac{\left\lbrace \sum_{i=1}^{N} \sum_{t=1}^{T} \psi^{\prime} \left( \hat{e}_{it} \right) \right\rbrace^2}{NT \sum_{i=1}^{N} \sum_{t=1}^{T}  \psi^2 \left( \hat{e}_{it} \right)} \nonumber
\end{equation}
where $\hat{e}_{it} = \frac{(\ddot{y}_{it} - \ddot{x}_{it}^T \hat{\beta})}{\hat{\sigma}}$, $\hat{\beta}$ and $\hat{\sigma}$ denote the current estimates of $\beta$ and $\sigma$, respectively.

Based on the above, we present the data-driven procedure introduced by \cite{Wang2007} to obtain the proposed robust estimators of $\beta$ in panel data regression models when Huber's and Tukey's bisquare functions are used as loss functions.
\begin{itemize}
\item[Step 1.] Compute the within group LS estimate, $\hat{\beta}_{\left( LS \right)}$ of the parameter vector, $\beta$ for the panel data regression model.
\item[Step 2.] Calculate the residuals $e_{it} = \ddot{y}_{it} - \ddot{x}_{it}^T \hat{\beta}_{\left( LS \right)}$ and obtain the initial estimate of $\sigma$ with the following equation.
\begin{equation} \nonumber
\hat{\sigma} = Median \left\lbrace \lvert \ddot{y}_{it} - \ddot{x}_{it}^T \hat{\beta}_{\left( LS \right)} \rvert \right\rbrace /0.6745
\end{equation}
\item[Step 3.] Obtain the $c$ value (within a range of values of $0$ to $3$ for Huber's function and $1$ to $10$ for Tukey's bisquare function as suggested by \cite{Wang2018}) which has the largest efficiency factor $\hat{\tau} \left( c \right)$.
\item[Step 4.] Finally calculate the robust estimator of $\beta$ by using the Huber's function and Tukey's bisquare function with $c = \hat{c}$ obtained in the previous step.
\end{itemize}

Next, the complete algorithm to determine the value of tuning parameter $c$ in the ESL function is provided by extending the data-driven procedure for penalized regression of \cite{Wang2013} to the linear panel data models with fixed effects. For more detailed information, please see \cite{Wang2013}.
\begin{itemize}
\item[Step 1.] Let $\mathbf{Z} = \lbrace \ddot{x}_{it}, \ddot{y}_{it} \rbrace_{i=1, t=1}^{N, T}$ be a random sample that contains $m$ bad points and $NT-m$ good points  denoted by $\mathbf{Z}_m = \lbrace Z_1, \cdots, Z_m \rbrace$ and $\mathbf{Z}_{NT-m} = \lbrace Z_{m+1}, \cdots, Z_{NT} \rbrace$, respectively. Then, obtain the residuals $e_{it} (\hat{\beta}_{0})  = \ddot{y}_{it} - \ddot{x}_{it}^T \hat{\beta}_{0}$ using the initial high breakdown coefficient $\hat{\beta}_{0}$ for $i = 1, \cdots, N$, $t = 1, \cdots, T$ and calculate median absolute deviation estimator (MAD), $\hat{\sigma}_{MAD} = 1.4826 \times median \lvert e_{it} (\hat{\beta}_{0}) - \underset{t} median (e_{it}(\hat{\beta}_{0})) \rvert$. Then, generate pseudo outlier set $\mathbf{Z}_m = \lbrace (\ddot{x}_{it}, \ddot{y}_{it}): \lvert e_{it} (\hat{\beta}_{0}) \rvert \geq 2.5 (\hat{\sigma}_{MAD}) \rbrace$ where $m = \# \lbrace 1 \leq i \leq N, 1 \leq t \leq T : \lvert e_{it} (\hat{\beta}_{0}) \rvert \geq 2.5 (\hat{\sigma}_{MAD}) \rbrace$ and $\mathbf{Z}_{NT-m} = \mathbf{Z}/\mathbf{Z}_m$.
\item[Step 2.] Let $c_N$ denote the minimizer of det$\left( \hat{V} \left( c \right) \right)$ in the set $G=\left\lbrace c: \xi \left( c \right)\in \left( 0, 1 \right] \right\rbrace$ where
\begin{equation} \label{Eq:9}
\xi \left( c_N \right) = \frac{2m}{NT} + \frac{2}{NT} \sum_{t=m+1}^T \rho_{c_N} \left\lbrace e_{it} \left( \hat{\beta}_0 \right) \right\rbrace
\end{equation}
for $i = 1, \cdots, N$  for a contaminated sample $\mathbf{Z}$ and det$\left(\cdot\right)$ denotes the determinant operator, $\hat{V} \left( c \right) = \left\lbrace \hat{I} \left( \hat{\beta}_0 \right) \right\rbrace^{-1} \tilde{\Sigma} \left\lbrace \hat{I} \left( \hat{\beta}_0 \right) \right\rbrace^{-1}$, and
\begin{eqnarray} \nonumber
\hat{I} \left( \hat{\beta}_0 \right) &=& \frac{2}{c} \left\lbrace \frac{1}{NT} \sum_{i=1}^N \sum_{t=1}^T exp\left( - \frac{e_{it}^2 \left( \hat{\beta}_0 \right)}{c}\right) \left( \frac{2e_{it}^2 \left( \hat{\beta}_0 \right)}{c}-1\right) \right\rbrace \\ \nonumber
&\times& \left( \frac{1}{NT}\sum_{i=1}^N \sum_{t=1}^T \ddot{x}_{it} \ddot{x}_{it}^T \right), \\ \nonumber
\tilde{\Sigma} &=& cov \left\lbrace exp\left( - \frac{e_{it}^2 \left( \hat{\beta}_0 \right)}{c}\right) \frac{2e_{it} \left( \hat{\beta}_0 \right)}{c} \ddot{x}_{1t}, \cdots, exp\left( - \frac{e_{it}^2 \left( \hat{\beta}_0 \right)}{c}\right) \frac{2e_{it} \left( \hat{\beta}_0 \right)}{c} \ddot{x}_{Nt} \right\rbrace_{t=1}^T. \nonumber
\end{eqnarray}
\item[Step 3.] Update $\hat{\beta}_0$ based on the selected value of $c_N$ in Step 2.
\end{itemize}
It should be noted that we use the MM-estimator of \cite{Yohai1987} to obtain the inital estimate $\hat{\beta}_{0}$ when detecting the outliers in Step 1 in the algorithm for ESL function. This provides to compute $\xi \left( c_N \right)$ by using Equation~(\ref{Eq:9}). Then, the Steps 1-3 are repeated until $\hat{\beta}_0$ and $c_N$ converge. To achieve high efficiency, the tuning parameter $c_N$ is selected by minimizing the determinant of asymptotic covariance matrix as in Step 2 as noted in \cite{Wang2013}. In the last Step, to obtain $\hat{\beta}_M$, $\hat{\beta}_0$ is updated in Step 3 and the algorithm is repeated until convergence. \cite{Wang2013} have emphasized that for the convergence of their computing algorithm, it is enough to repeat Steps 1-3 once. If the initial estimator $\hat{\beta}_0$ is robust having asymptotic breakdown point of $1/2$ and $c_N$ is chosen such that $\xi \left( c_N \right)\in \left( 0, 1 \right]$, then the breakdown point of $\hat{\beta}_M$, $BP\left(\hat{\beta}_M; \mathbf{Z}_{NT-m}, c_N \right)$ is asymptotically $1/2$, see \cite{Wang2013}, Theorem 2. This leads us to choose $c_N$ to achieve the highest efficiency. 

\section{Asymptotic Properties} \label{Sec:5}
In this section, we established the asymptotic properties of the regression estimators of under fixed effects linear panel data models. Before introducing the asymptotic results, we begin with listing some assumptions.

To obtain the consistent estimates of the parameters of fixed effects linear panel data models, the within group LS method requires to satisfy some assumptions (see \cite{Wooldridge2002}) as listed below:
\begin{assu}
\begin{itemize}
\item[A1.] $E \left( \varepsilon_{it} \vert \mathbf{x}_i, \alpha_i \right)$ for $t= 1, \cdots, T$ implies that the strict exogeneity of $\lbrace x_{it}: t= 1, \cdots, T \rbrace$ conditional on $\alpha_i$'s.
\item[A2.] rank$\left( \sum_{t=1}^T E \left( \ddot{x}_{it}^T \ddot{x}_{it} \right) \right) =$ rank$\left[ E \left( \ddot{\mathbf{X}}_{i}^T \ddot{\mathbf{X}}_{i} \right) \right] = K$ where $\ddot{\mathbf{X}}_{i}$ is a $T \times K$ matrix of predictors.
\item[A3.] $E \left( \varepsilon_{i} \varepsilon_{i}^T \vert \mathbf{x}_i, \alpha_i \right) = \sigma_{\varepsilon}^2 I_T$.
\end{itemize}
\end{assu}
The following objective function is required to be minimized in obtaining our proposed estimators $\hat{\beta}_M$s based on M-estimation techniques, 
\begin{equation} \label{Eq:10}
L \left( \beta \right) = \sum_{i=1}^{N} \rho_{\tau} \left( \frac{\ddot{y}_i - \ddot{\mathbf{X}}_{i}^T \beta}{S_N}\right)
\end{equation}
where $S_N$ denotes a scale parameter, and is a normalized MAD estimator obtained using the residuals $e_{it} = \ddot{y}_{it} - \ddot{x}_{it}^T \hat{\beta}_0$ (cf. \cite{Jiang2019}). It should be noted that the proposed estimators $\hat{\beta}_M$s are regression and scale equivariant since $S_N$ has the advantages of being scale equivariant and invariant to the regression. 

In fixed effects linear panel data regression models, the regularity conditions required for the consistency and asymptotic normality of the proposed estimators are as follows:
\begin{assu} \label{Assumption A}
\begin{itemize}
\item[A1.] $E \left( \ddot{\mathbf{X}}_{i}^T \ddot{\mathbf{X}}_{i} \right)$ is positive definite, and $E \Big\| \mathbf{X}^3 \Big\| < \infty$. 
\item[A2.] $E \psi_{\tau} \left( \varepsilon_i \right) = 0$ and $E \psi_{\tau}^2 \left( \varepsilon_i \right) = \sigma_{\psi_{\tau}}^2$. 
\item[A3.] $E \left[ \psi^{\prime}_{\tau}  \left( \varepsilon / \sigma \right) \right] > 0$ for $\tau$ where prime denotes the derivative.
\end{itemize}
\end{assu}
The assumption $E \psi_{\tau} \left( \varepsilon_i \right) = 0$ is needed to provide Fisher consistent estimating functions $U(\beta)$ as noted in \cite{Wang2007}. 
\begin{theorem} \label{theo:1}
Let Assumption \ref{Assumption A} hold. If $S_N \xrightarrow{\text{p}} \sigma$ as $N \rightarrow \infty$, then a local minimum of $L \left( \beta \right)$ occurs at $\hat{\beta}_M$ such that $\Big\| \hat{\beta}_M-\beta \Big\| = \mathcal{O}_p \left( N^{-1/2} \right)$ where ``$\xrightarrow{\text{p}}$'' represents the convergence in probability.
\end{theorem}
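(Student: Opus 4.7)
The plan is to use the standard ``ball-shrinking'' device going back to Huber. Define the rescaled centered criterion
\begin{equation*}
V_N(u) = L(\beta + N^{-1/2}u) - L(\beta), \qquad u \in \mathbb{R}^K,
\end{equation*}
and show that for every $\epsilon > 0$ one can choose $C = C(\epsilon) < \infty$ with $P\bigl(\inf_{\|u\|=C} V_N(u) > 0\bigr) \geq 1 - \epsilon$ for all sufficiently large $N$. Since $V_N$ is continuous in $u$ and vanishes at $u = 0$, this forces a local minimizer $\hat u_N$ in the open ball $\{\|u\| < C\}$, and then $\hat\beta_M = \beta + N^{-1/2}\hat u_N$ gives $\|\hat\beta_M - \beta\| = \mathcal{O}_p(N^{-1/2})$.

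The core is a second-order expansion of $V_N(u)$ about the origin. Writing $\ddot{y}_{it} - \ddot{x}_{it}^T\beta = \ddot{\varepsilon}_{it}$ and expanding (valid pointwise for the Tukey and ESL losses),
\begin{equation*}
V_N(u) = -\frac{1}{S_N\sqrt{N}} \sum_{i=1}^N\sum_{t=1}^T \psi_\tau\bigl(\ddot\varepsilon_{it}/S_N\bigr)\ddot{x}_{it}^T u + \frac{1}{2 S_N^2 N}\, u^T \Biggl[\sum_{i=1}^N\sum_{t=1}^T \psi_\tau'\bigl(\tilde{e}_{it}\bigr)\ddot{x}_{it}\ddot{x}_{it}^T\Biggr] u,
\end{equation*}
for intermediate points $\tilde e_{it}$. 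Using the consistency $S_N \xrightarrow{\text{p}} \sigma$ and the boundedness of $\psi_\tau$ (which holds for all three loss functions of interest), $S_N$ can be replaced by $\sigma$ at the cost of additive $o_p(1)$ terms. For the linear part, because the blocks $(\varepsilon_{i1},\dots,\varepsilon_{iT},\ddot x_{i1},\dots,\ddot x_{iT})$ are i.i.d.\ across $i$ and, by Assumption A2 together with strict exogeneity, yield a centered score, the multivariate CLT combined with the moment bound $E\|X\|^3 < \infty$ in Assumption A1 gives
\begin{equation*}
\frac{1}{\sqrt N}\sum_{i=1}^N\sum_{t=1}^T \psi_\tau(\ddot\varepsilon_{it}/\sigma)\ddot{x}_{it} = \mathcal{O}_p(1),
\end{equation*}
so the linear part is bounded by $C_1\|u\|$ in probability. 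For the quadratic part, the $i$-blocks are again i.i.d., hence the LLN together with Assumptions A1 and A3 gives
\begin{equation*}
\frac{1}{N}\sum_{i=1}^N\sum_{t=1}^T \psi_\tau'(\ddot\varepsilon_{it}/\sigma)\ddot{x}_{it}\ddot{x}_{it}^T \xrightarrow{\text{p}} \Gamma,
\end{equation*}
with $\Gamma$ positive definite. Combining, on $\|u\| = C$ one has $V_N(u) \geq -C_1 C + \tfrac12 \lambda_{\min}(\Gamma)\, C^2$ with probability tending to one, and choosing $C$ sufficiently large makes the right-hand side strictly positive, closing the argument.

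The main obstacle is the Huber loss, whose $\psi_\tau$ is not continuously differentiable, so the pointwise Taylor expansion is not literally valid in that case. To cover Huber I would instead exploit convexity of $\rho$ via Pollard's convex-minimizer argument for M-estimators (matching the leading-order quadratic approximation without requiring $\psi_\tau'$ to exist), or, equivalently, use empirical-process tools to replace $\psi_\tau$ by its smooth expectation uniformly in $u$ on compact sets. A secondary technicality is that $\{\ddot\varepsilon_{it}\}_{t=1}^T$ are dependent within an $i$ (they sum to zero across $t$), so the CLT and LLN must be applied at the level of the independent $i$-blocks, with the inner $t$-sum absorbed into the summand rather than treated jointly across $(i,t)$.
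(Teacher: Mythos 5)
Your proof follows essentially the same route as the paper's: both expand $L(\beta + N^{-1/2}u) - L(\beta)$ to second order, control the linear term by a CLT and the quadratic term by a law of large numbers, and conclude via the standard ball argument that a local minimizer lies within $\mathcal{O}_p(N^{-1/2})$ of $\beta$. The paper delegates the final step to Theorem 1 of \cite{Jiang2019}, whereas you carry it out explicitly and, usefully, flag the non-differentiability of Huber's $\psi$ and the within-block dependence of the demeaned errors, both of which the paper passes over silently.
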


\begin{proof} \label{pro:1}
For any $\nu >0$, a large constant C exists satisfying
\begin{equation} \nonumber
P \left( \inf_{\Big\| u \Big\| = C} L \left( \beta + N^{-1/2}u \right) > L \left( \beta \right) \right) > 1 - \nu
\end{equation}
To prove that the existence of a local minimum of $L \left( \beta \right)$ in the ball $\left\lbrace \beta + N^{-1/2}u: \Big\| u \Big\| \leq C \right\rbrace$ with probability at least $1-\nu$, under A1 and A3 of Assumption \ref{Assumption A}, 
\begin{align*}
&L \left( \beta + N^{-1/2}u \right) - L \left( \beta \right) = \sum_{i=1}^N \rho_{\tau} \left( \frac{\ddot{\mathbf{y}}_{i} - \ddot{\mathbf{X}}_{i}^T \left( \beta + N^{-1/2}u \right)}{S_N} \right) - \sum_{i=1}^N \rho_{\tau} \left( \frac{\ddot{\mathbf{y}}_{i} - \ddot{\mathbf{X}}_{i}^T \beta}{S_N} \right) \\
&=-\frac{1}{\sqrt{N} S_N} \sum_{i=1}^N \psi_{\tau} \left( \frac{\ddot{\mathbf{y}}_{i} - \ddot{\mathbf{X}}_{i}^T \beta}{S_N} \right) \ddot{\mathbf{X}}_{i}^T u + \frac{1}{2 S^2_N} u^T \left[\frac{1}{N} \sum_{i=1}^N \psi_{\tau}^{\prime} \left( \frac{\ddot{\mathbf{y}}_{i} - \ddot{\mathbf{X}}_{i}^T \beta}{S_N} \right)\ddot{\mathbf{X}}_{i}^T \ddot{\mathbf{X}}_{i} + o_p(1) \right] u \\
& \triangleq I_1 + I_2 
\end{align*}
where $\ddot{\mathbf{y}}_{i}$ is a $T \times 1$ vector. Then, by using the fact that $S_N \xrightarrow{\text{p}} \sigma$, the law of large numbers and central limit theorem, the proof of Theorem \ref{theo:1} direclty follows from Theorem 1 of \cite{Jiang2019}.
\end{proof}

\begin{remark}
Theorem \ref{theo:1} guarantees the existence of a consistent estimator under some regularity conditions (cf. \cite{Jiang2019}). It should be noted that the existence of redescending M-estimator is not ensured for the unbounded loss function. Also, \cite{Maronna1981} provide the results for the existence of the redescending M-estimator when some conditions hold.
\end{remark}

\begin{theorem} \label{theo:2}
Under Assumption \ref{Assumption A} and $E \lvert \psi_{\tau} \left( \varepsilon_i + \delta \right) - \psi_{\tau} \left( \varepsilon_i \right) \rvert^2 = o(1)$  as $\delta \rightarrow 0$, if $S_N \xrightarrow{\text{p}} \sigma$ as $N \rightarrow \infty$, then, we obtain
\begin{equation} \nonumber
\sqrt{N}\left( \hat{\beta}_M - \beta \right) \xrightarrow{\text{d}} N \left(0, \frac{E \left[ \psi^2_{\tau}  \left( \varepsilon / \sigma \right) \right]}{\left( E \left[ \psi^{\prime}_{\tau}  \left( \varepsilon / \sigma \right) \right] \right)^2} \sigma^2 \left[ E \left( \ddot{\mathbf{X}}_{i}^T \ddot{\mathbf{X}}_{i} \right) \right]^{-1} \right)
\end{equation}
where ``$\xrightarrow{\text{d}}$'' denotes the convergence in distribution. 
\end{theorem}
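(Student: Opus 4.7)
The plan is to follow the classical M-estimator argument: expand the first-order condition of \eqref{Eq:10} around the true $\beta$, isolate a Jacobian piece (treated by a law of large numbers) and a normalised score piece (treated by a central limit theorem), and combine via Slutsky. Differentiating $L(\beta)$ at $\hat{\beta}_M$ gives
\begin{equation} \nonumber
0 = \sum_{i=1}^N \sum_{t=1}^T \ddot{x}_{it}\, \psi_\tau\!\left(\frac{\ddot{y}_{it} - \ddot{x}_{it}^T \hat{\beta}_M}{S_N}\right),
\end{equation}
and I would expand $\psi_\tau$ in its argument around $(\ddot{y}_{it} - \ddot{x}_{it}^T \beta)/S_N$. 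This is legitimate because $\psi_\tau'$ is bounded and almost everywhere continuous for Huber's, Tukey's, and ESL choices and because Theorem~\ref{theo:1} already delivers $\Big\|\hat{\beta}_M - \beta\Big\| = \mathcal{O}_p(N^{-1/2})$. The expansion produces
\begin{equation} \nonumber
\sqrt{N}(\hat{\beta}_M - \beta) = A_N^{-1} B_N + o_p(1),
\end{equation}
with $A_N = (N S_N)^{-1}\sum_{i,t} \ddot{x}_{it}\ddot{x}_{it}^T \psi_\tau'\!\big((\ddot{y}_{it} - \ddot{x}_{it}^T \beta^*)/S_N\big)$ for some $\beta^*$ between $\hat{\beta}_M$ and $\beta$, and $B_N = N^{-1/2} \sum_{i,t} \ddot{x}_{it}\, \psi_\tau(\ddot{\varepsilon}_{it}/S_N)$.

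For the Jacobian $A_N$ I would combine $\beta^* \xrightarrow{\text{p}} \beta$ from Theorem~\ref{theo:1}, $S_N \xrightarrow{\text{p}} \sigma$ by hypothesis, the moment bound in Assumption~\ref{Assumption A}(A1), and the i.i.d.\ cross-sectional structure to invoke a uniform law of large numbers in a shrinking neighbourhood of $(\beta,\sigma)$, concluding $A_N \xrightarrow{\text{p}} \sigma^{-1}\, E[\psi_\tau'(\varepsilon/\sigma)]\, E(\ddot{\mathbf{X}}_i^T \ddot{\mathbf{X}}_i)$; invertibility of this limit follows from A1 and A3. For the score $B_N$, once $S_N$ has been replaced by $\sigma$ (see the next paragraph), the per-unit summands $M_i = \sum_t \ddot{x}_{it}\,\psi_\tau(\ddot{\varepsilon}_{it}/\sigma)$ are i.i.d.\ with mean zero by Assumption~\ref{Assumption A}(A2), so the multivariate CLT gives $B_N \xrightarrow{\text{d}} N\!\big(0,\, \sigma_{\psi_\tau}^2 E(\ddot{\mathbf{X}}_i^T \ddot{\mathbf{X}}_i)\big)$. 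Slutsky's theorem then assembles the sandwich covariance stated in the theorem.

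The main obstacle — and the reason the $L^2$ equicontinuity hypothesis $E|\psi_\tau(\varepsilon_i+\delta) - \psi_\tau(\varepsilon_i)|^2 = o(1)$ is imposed — is replacing the random scale $S_N$ by $\sigma$ inside $B_N$, which cannot be done by a Taylor expansion because $\psi_\tau$ itself need not be globally differentiable (the Huber $\psi$ has a kink at $\pm c$). My plan is to condition on the high-probability event $\{|S_N/\sigma - 1|\le \eta_N\}$ with $\eta_N \downarrow 0$ slowly, write $\ddot\varepsilon_{it}/S_N = \ddot\varepsilon_{it}/\sigma + \delta_{it,N}$ with $\delta_{it,N}$ small in probability uniformly in $(i,t)$, and control the $L^2$ norm of $N^{-1/2}\sum_{i,t} \ddot{x}_{it}\,[\psi_\tau(\ddot\varepsilon_{it}/S_N) - \psi_\tau(\ddot\varepsilon_{it}/\sigma)]$ via the equicontinuity hypothesis together with the third-moment bound on $\mathbf{X}$ from A1. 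Once that negligibility is discharged, the rest of the argument follows the template of Theorem~2 of \cite{Jiang2019}.
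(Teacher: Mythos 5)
Your proposal follows essentially the same route as the paper: a Taylor expansion of the estimating equation $\phi_N(\hat{\beta}_M)=0$ around the true $\beta$, a law of large numbers for the Jacobian term, a central limit theorem for the normalised score, and Slutsky's theorem to assemble the sandwich covariance, with the remaining details deferred (as in the paper) to Theorem~\ref{theo:1} and to \cite{Jiang2019}. If anything, you are more explicit than the paper about where the $L^2$-equicontinuity hypothesis $E\lvert \psi_{\tau}(\varepsilon_i+\delta)-\psi_{\tau}(\varepsilon_i)\rvert^2=o(1)$ is actually used --- namely in replacing the random scale $S_N$ by $\sigma$ inside the score --- a step the paper passes over by simply invoking $S_N \xrightarrow{\text{p}} \sigma$.
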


\begin{proof} \label{pro:2}
Let assume
\begin{equation*}
\phi_N \left( \theta \right) = \frac{1}{N S_N} \sum_{i=1}^N \psi_{\tau} \left( \frac{\ddot{\mathbf{y}}_{i} - \ddot{\mathbf{X}}_{i}^T \theta}{S_N}\right) \ddot{\mathbf{X}}_{i} 
\end{equation*}
Then, we have the following by Taylor's expansion
\begin{equation*}
\phi_N \left( \hat{\beta}_M \right) = \phi_N \left( \beta \right) + \phi^{\prime}_N \left( \beta \right) \left( \hat{\beta}_M - \beta \right) + \frac{\left( \hat{\beta}_M - \beta \right)^T  \phi_N^{''} \left( \beta^* \right)  \left( \hat{\beta}_M - \beta \right)}{2} 
\end{equation*}
where $\beta^*$ is a vector that lies between $\hat{\beta}_M$ and $\beta$, and also, $\phi^{\prime}_N \left( \cdot \right)$ and $\phi_N^{''} \left( \cdot \right)$ represent the first-order and second-order derivatives of $\phi_N \left( \cdot \right)$, respectively. Then, $\phi_N \left( \hat{\beta}_M \right)=0$ from Equation \ref{Eq:10} (cf. \cite{Jiang2019}). Under A1-A2 of Assumption \ref{Assumption A}, we obtain 

\begin{equation*}
=\frac{S_N}{\sqrt{N}} \sum_{i=1}^N \psi_{\tau} \left( \frac{\ddot{\mathbf{y}}_{i} - \ddot{\mathbf{X}}_{i}^T \beta}{S_N} \right) \ddot{\mathbf{X}}_{i} = \sqrt{N} \left( \hat{\beta}_M - \beta \right) \frac{1}{N} \sum_{i=1}^N \psi_{\tau}^{\prime} \left( \frac{\ddot{\mathbf{y}}_{i} - \ddot{\mathbf{X}}_{i}^T \beta}{S_N} \right)\ddot{\mathbf{X}}_{i}^T \ddot{\mathbf{X}}_{i} + o_p(1) 
\end{equation*}
by Theorem \ref{theo:1}. The proof of Theorem \ref{theo:2} is completed since $S_N \xrightarrow{\text{p}} \sigma$ as $N \rightarrow \infty$.
\end{proof}

\section{Numerical Results} \label{Sec:6}
In this section, we conduct an extensive simulation study to examine the finite-sample properties of proposed and some existing panel data estimators. To investigate the performances of the proposed procedures, three scenarios, namely (i) different sample sizes, (ii) different error distributions, and (iii) various types of outliers, have been considered. The following simulations and all calculations have been carried out using \texttt{R} 3.6.0. on an IntelCore i7 6700HQ 2.6 GHz PC. (The codes can be obtained from the author upon request.)

For the data generation process, we consider the following static fixed-effect panel-data model 
\begin{equation} \label{Eq:dgp}
y_{it} = x_{it}^T \beta + \alpha_i + \varepsilon_{it},~~i = 1, \ldots, N,~~t = 1, \ldots, T ,
\end{equation} 
where $\varepsilon_{it}$'s are assumed to be iid $\text{N}(\mu = 0, \sigma^2 = 1)$ and the vector of parameters is chosen as $\beta^T = \left( \beta_1, \beta_2 \right) = \left( 2.4, -1.2 \right)$. Similar to experimental design of \cite{Aquaro2013}, the unobservable individual effects are generated as follows
\begin{equation*}
\alpha_i = \sum_{t=1}^T x_{it}^T \gamma / \sqrt{T} + \eta_i
\end{equation*}
where $\gamma = \left( 2, 4\right)^T$ and $\eta_i \sim U \left( 0, 12 \right)$ to guarantee homogeneity of the variances of deterministic $\gamma$ and random parts $\eta_i$ of $\alpha_i$. To avoid completely symmetric design as in \cite{Aquaro2013}, the independent variables $x_{itK}$ for $K = 1, 2$ are generated according to 
\begin{displaymath}
x_{itK} \sim \left\{ \begin{array}{ll}
\chi_2^2 - 2 & \textrm{if K=1}\\
N \left( 0, 1 \right) & \textrm{if K=2}, \end{array} \right.
\end{displaymath}
where $\chi_2^2$ denote the chi-squared distribution with 2 degrees of freedom.

The performances of the proposed estimators are evaluated based on $S = 1000$ simulations. In case of the changing sample sizes and presence of outliers, we compute the mean squared errors (MSE): $MSE = \frac{1}{S} \sum_{s=1}^S \Big\| \widehat{\beta}^s-\beta \Big\|^2$, where $\widehat{\beta}^s$, $s=1, \cdots, S$, represents the estimates obtained from $S$ simulated samples. Also, we calculated the squared errors (SE): $SE = \Big\| \widehat{\beta}^s-\beta \Big\|^2$ to examine the effect of the error distributions on the estimators. Note that throughout the experiments, all calculations are performed for the finite sample breakdown points $BP = 0.1$ and $BP = 0.5$ for WMS estimator of \cite{Bramati2007}. However, we present only the results obtained for the choice of $BP = 0.1$ since WMS procedure with this breakdown point produced better results.

\subsection{Sample sizes} \label{Sec:6.1}
In this subsection, different values of cross-sectional dimension, $N = 50, 100, 150, 200, 250$ (by keeping time period fixed at $T = 3$) and time periods, $T = 4, 6, 9, 12, 24$ (for a fixed number of cross-sectional units $N = 50$) are considered to investigate the influence of panel sizes on our proposed estimators. Figure~\ref{Fig:1} illustrates calculated MSE values of the estimators when the errors follow standart normal distribution, $\varepsilon_{it} \sim$ $\text{N}(\mu = 0, \sigma^2 = 1)$. In this figure, the lines representing the performances of the LS and proposed loss functions (Huber, Tukey, Exponential) based estimators using data-dependent regularization parameters overlap as $N \rightarrow \infty$ and $T \rightarrow \infty$ since they have almost same performances under normal errors. However, the WMS procedure are not consistent for fixed time dimension while it is consistent for increasing values of $T$ as noted in \cite{Aquaro2013}. These results also confirm that the proposed methods are asymptotically equivalent to LS estimator as $N$ and/or $T$ increases.

\subsection{Error distributions} \label{Sec:6.2}
The SE values of the estimators are calculated under four different error distributions: $\text{N}(\mu = 0, \sigma^2 = 1)$, Student's t distribution with 5 degrees of freedom ($t_5$), the chi-squared distribution with 4 degrees of freedom $\chi^2_{(4)}$ and standard Cauchy distribution $Cauchy(0, 1)$ to demonstrate the effects of error distributions on the estimation methods. Also, three pairs of cross-sectional sizes and time periods: $(N_1, T_1)=(30,20)$, $(N_2, T_2)=(75, 8)$ and $(N_3, T_3)=(200, 3)$ are considered as in \cite{Aquaro2013}. The simulation results are presented in Figure~\ref{Fig:2}. The skewness of SE values of WMS procedure (with $BP = 0.1$) is greater than the other methods for Normal, $t_5$ and $\chi^2_{(4)}$ distribution of the errors especially for increasing $N$ or decreasing $T$. This means that the variability and mean of SE values obtained for WMS estimator are slightly higher than the other estimators. All the proposed and LS estimators have similar SE values under all the error distributions, except for standard Cauchy distribution. Also, the mean and variance of the SE values for LS method increases more than the other estimators in case of the Cauchy distribution of the error terms while the proposed robust estimators and WMS of \cite{Bramati2007} are less affected.

\subsection{Outliers} \label{Sec:6.3}

This subsection presents the robustness performances of the estimation procedures in case of the various types of outliers. Throughout the simulations, two different levels of cross-sectional sizes and time periods, namely, $N_1 = 120$, $T_1 = 2$ and $N_2 = 80$, $T_2 = 3$, are considered for the fixed panel size consisting of a total of 240 observations. The proportions of contamination are chosen as 5\% and 10\% by determining the number of outliers as $m = 12$ and $m = 24$. Two main types of contamination, namely, random contamination and concentrated (or clustered) contamination as in \cite{Bramati2007} are used to generate contaminated datasets. The outlying points are randomly distributed over all observations for random contamination while the outliers concentrating in some blocks constitute at least a half of the observations within cross-sectional units for concentrated contamination. For more detailed information on the types of contamination and graphs of contamination schemes, please see Figure 1 of \cite{Bramati2007}. The considered contamination schemes depending on the types of outliers are described as follows.

\begin{itemize}
\item[1.] To generate random vertical outliers ($y_{it}^r$), randomly selected original values of the dependent variable are replaced by the observations from an Uniform distribution $y_{it}^r \sim U \left(20,80\right)$.
\item[2.] At first, randomly chosen values of the dependent variable are contaminated following the same rule of first scheme. After that, the values of the independent variables corresponding to the observations contaminated in the dependent variable are replaced by the points coming from a Normal distribution $\text{N}(\mu = 8, \sigma^2 = 4)$ to obtain the random leverage points.
\item[3.] Concentrated vertical outliers are inserted into the data by substituting the random observations from an Uniform distribution $y_{it}^c \sim U \left(79,80\right)$ for the randomly selected blocks of the original values of dependent variable. 
\item[4.] Firstly, the contaminated blocks of the dependent variable are obtained by applying the same rule used in third scheme. In the next step, the concentrated leverage points are inserted into the original blocks of the independent variables corresponding to the blocks already contaminated in dependent variable by replacing them by the random values from a Normal distribution $\text{N}(\mu = 8, \sigma^2 = 4)$.
\end{itemize}

The simulation results are summarized in Tables~\ref{Tab1:outliers-N-120-80-T-2-3-mse}-\ref{Tab2:outliers-rmse}. At first glance, it is conspicuous from Table~\ref{Tab1:outliers-N-120-80-T-2-3-mse} that the LS method leads to obtain distorted estimates of the parameters of interest in the presence of outliers. It can further be seen that LS estimators can be severely degraded when the data is contaminated by the leverage points, compared to the case of vertical outliers. Additionally, the MSE values obtained for LS increase as the level of contamination increases regardless of the types and levels of contaminations and are severely affected in presence of concentrated contamination. Conversely, the proposed estimators (named according to the corresponding loss functions: Huber, Tukey, Exponential) and WMS of \cite{Bramati2007} are highly resistant to outliers. In particular, the proposed robust estimators are quite less affected by the choice of contamination level and/or scheme compared to WMS procedure. One of the remarkable results is that Tukey estimator based on the Tukey's bisquare function produces best results with smallest MSEs among the robust estimators under all contamination schemes for $(N_1,T_1) = (120, 2)$. The performances of the robust estimators can be sorted based on minimizing MSE values as follows: Tukey, Exponential, Huber estimators and WMS for $T_1 = 2$. We further see that, the performance of WMS estimator is sensitive to the increasing level of contamination and concentrated contamination when $T_1 = 2$ as noted in \cite{Aquaro2013} and \cite{Bramati2007}. Although there are no substantial differences between WMS and Exponential loss function based estimator at 5\% contamination for $(N_2,T_2) = (80, 3)$, the performance of WMS method is slightly better than the proposed robust estimators. However, Exponential and Tukey's bisquare functions based estimators produce more accurate and precise estimates of the parameters for random contamination and concentrated contamination, respectively when the level of contamination increases. Our results clearly demonstrate that the all proposed estimators (Huber, Tukey, Exponential) are reasonable competitors that often exhibit improved performance over the traditional LS method and robust WMS estimator especially for increasing level of contamination.   

To confirm the supremacy over the traditional LS and WMS methods, we also calculate the root mean squared errors (RMSE) of predicted values. To this end, the simulated samples are divided into the following two parts: the predictive model is constructed based on training sample to compute the prediction errors from testing sample with cross-sectional size $ N_{test}$. Then, RMSEs of the predicted values are calculated by 
\begin{equation} \label{Eq:rmse}
RMSE = \frac{1}{S} \sum_{s=1}^S \sqrt{\sum_{i=1}^{N_{test}} \sum_{t=1}^{T} (y_{it}^s - \hat{y}_{it}^s)^2 / \left( N_{test} \times T \right)} =\frac{1}{S} \sum_{s=1}^S \sqrt{ \sum_{t=1}^{T_i} (y_{it}^s - \hat{y}_{it}^s)^2 / T_i }
\end{equation}
where $i = 1, 2, \cdots, N_{test}$, $y_{it}^s$ and $\hat{y}_{it}^s$, respectively, denote the observed values and predictions obtained from the predictive model for each of $S$ simulated samples. Throughout simulations, we determine the sizes of test samples as $N_{test} \times T_1$ for $(N_1,T_1) = (120, 2)$ and $N_{test} \times T_2$ for $(N_2,T_2) = (80, 3)$  with the choice of $N_{test} = 50$. Table~ \ref{Tab2:outliers-rmse} displays the RMSEs obtained from the predictive model constructed by the estimated regression coefficients using all the estimators considered in this study. The results provided by the robust estimators are similar and also, better than traditional LS method especially for concentrated leverage points. However, proposed Tukey's bisquare and Exponential loss functions based estimators have slightly better performance among the robust estimators, in general.

\section{Case Study} \label{Sec:7}

In this section, we compare the performances of all the estimators to check the supremacy of proposed M-estimation procedures based on automatic selection of the tuning parameters via a real macroeconomic application. The data set consists of a total of 342 (N = 18, T = 19) annual observations covering a cross-section of 18 OECD countries over the period 1960-1978 (available in the \texttt{R} package \texttt{plm}, please see \cite{Croissant2008}). For this panel, the gasoline demand model studied by \cite{Baltagi1983} is as follows
\begin{equation*}
\ln GC_{it} = \alpha_i + \beta_1 \ln IPC_{it} + \beta_2 \ln GP_{it}  + \beta_3 \ln CS_{it} + \varepsilon_{it}
\end{equation*}
where $GC$ represents the gasoline consumption per car as dependent variable, $IPC$, $GP$ and $CS$, respectively, denote real income per capita, real gasoline price and car stock per capita as independent variables. The index $i = 1, 2, \cdots, 18$ and $t = 1, 2, \cdots, 19$ denote the OECD countries and years, respectively; see \cite{Baltagi1983} for more detailed description of the data set. Figure~\ref{Fig:3} shows the scatter plots of the logarithm of response variable against the logarithms of explanatory variables. From Figure~\ref{Fig:3}, we observe that there exist outliers in the response and the independent variables. Table~\ref{Tab3:gasoline_consumption} reports the estimates of individual coefficients and standard errors of the estimates. We can see from Table~\ref{Tab3:gasoline_consumption} that the robust methods yield more efficient estimates than the traditional LS method. The WMS estimators produce more precise estimates of individual coefficients among the robust procedures. However, the proposed methods, especially Exponential loss function based estimator, provide competitive results in estimating $\beta_2$ and $\beta_3$. 

To investigate the predictive performances of the LS, WMS and proposed robust methods further, the RMSE of predicted and observed values of the logarithms of the gasoline consumption are calculated by dividing the dataset into the two parts as mentioned in Section \ref{Sec:6.3}. For this purpose, the randomly selected 15 countries ($N-N_{test} = 15$) are used to build a predictive model, and the logarithms of the gasoline consumption of remaining 3 countries ($N_{test} = 3$) are predicted by using the estimated model coefficients. This process is performed $S = 100$ times, and for each time, RMSEs are obtained by using the Equation~(\ref{Eq:rmse}). The results are presented in Figure~\ref{Fig:4}. This figure demonstrates that all the robust methods have better predictive ability compared to the traditional LS method. Furthermore, our proposed robust methods yield considerably better predictions than WMS method.

\section{Conclusions} \label{Sec:8}
In this paper, we propose robust and efficient estimators based on minimizing loss functions in estimating the parameters of fixed effects panel data models. The proposed procedures uses a data-driven approach for selection of regularization parameters, which is important to determine the necessary level of robustness without sacrificing estimation efficiency in the presence outlying observations and/or blocks. The asymptotic properties of the proposed estimators are investigated in the context of fixed effects linear panel data models.
The finite sample performances of the proposed methods are illustrated via extensive simulation studies and an empirical application, and we compare the results of our methods with those for existing WMS estimator of \cite{Bramati2007} and traditional LS method. Our thorough study demonstrates that the proposed estimators produce more accurate and efficient parameter estimates with better predictions compared to WMS and traditional LS methods under data contamination, and  consistent results with the LS method when no outliers are present in the data and sample size increases.

\section*{References}

\bibliography{mybibfile}

\clearpage

\begin{table}
\centering
\caption{The MSEs for $(N_1,T_1) = (120, 2)$, $(N_2,T_2) = (80, 3)$ in the presence of 5\% and 10\% random and concentrated contamination by setting the number of outliers as $m= 12, 24$}
\scriptsize
\begin{tabular}{l c c c c c c c c}
\hline
($N_1$, $T_1$) = (120, 2) & \multicolumn{4}{c}{Random contamination} & \multicolumn{4}{c}{Concentrated contamination} \\
\cmidrule{1-1}
Method  & \multicolumn{2}{c}{Vertical outliers} & \multicolumn{2}{c}{Leverage points} & \multicolumn{2}{c}{Vertical outliers} & \multicolumn{2}{c}{Leverage points} \\ 
& 5\% & 10\% & 5\% & 10\% & 5\% & 10\% & 5\% & 10\% \\
\hline
LS & 1.651 & 2.579 & 7.688 & 9.116 & 2.985 & 5.916 & 25.231 & 30.581 \\
Huber & 0.710 & 0.941 & 0.678 & 0.942 & 0.746 & 0.987 & 0.771 & 0.963 \\
Tukey & 0.609 & 0.842 & 0.614 & 0.828 & 0.577 & 0.642 & 0.569 & 0.630 \\
Exponential & 0.637 & 0.848 & 0.618 & 0.823 & 0.595 & 0.668 & 0.602 & 0.673 \\
WMS & 0.762 & 1.571 & 0.727 & 1.547 & 1.009 & 3.951 & 1.102 & 3.644 \\
\hline 
($N_2$, $T_2$) = (80, 3) \\
\cmidrule{1-1}
LS & 1.172 & 1.982 & 7.464 & 8.830 & 2.467 & 4.150 & 25.422 & 30.490 \\
Huber & 0.575 & 0.806 & 0.606 & 0.824 & 0.599 & 0.902 & 0.652 & 0.922 \\
Tukey & 0.522 & 0.713 & 0.559 & 0.719 & 0.467 & 0.649 & 0.530 &  0.642 \\
Exponential & 0.517 & 0.706 & 0.558 & 0.709 & 0.477 & 0.695 & 0.542 & 0.704 \\
WMS & 0.510 & 0.755 & 0.522 & 0.773 & 0.460 & 0.926 & 0.485 & 0.947 \\
\hline
\end{tabular}
\label{Tab1:outliers-N-120-80-T-2-3-mse}
\end{table}

\begin{table}
\centering
\caption{The RMSEs of predicted values for $(N_1,T_1) = (120, 2)$, $(N_2,T_2) = (80, 3)$ in the presence of 5\% and 10\% random and concentrated contamination by setting the number of outliers as $m= 12, 24$}
\scriptsize
\begin{tabular}{l c c c c c c c c}
\hline
($N_1$, $T_1$) = (120, 2) & \multicolumn{4}{c}{Random contamination} & \multicolumn{4}{c}{Concentrated contamination} \\
\cmidrule{1-1}
Method  & \multicolumn{2}{c}{Vertical outliers} & \multicolumn{2}{c}{Leverage points} & \multicolumn{2}{c}{Vertical outliers} & \multicolumn{2}{c}{Leverage points} \\ 
& 5\% & 10\% & 5\% & 10\% & 5\% & 10\% & 5\% & 10\% \\
\hline
LS & 4.815 & 4.918 & 5.134 & 5.227 & 4.923 & 5.128 & 6.078 & 6.332 \\
Huber & 4.744 & 4.788 & 4.722 & 4.753 & 4.738 & 4.764 & 4.702 & 4.761 \\
Tukey & 4.739 & 4.782 & 4.718 & 4.744 & 4.722 & 4.734 & 4.687 & 4.730 \\
Exponential & 4.740 & 4.784 & 4.718 & 4.742 & 4.724 & 4.736 & 4.690 & 4.734 \\
WMS & 4.748 & 4.838 & 4.727 & 4.801 & 4.766 & 4.992 & 4.732 & 4.968 \\
\hline 
($N_2$, $T_2$) = (80, 3) \\
\cmidrule{1-1}
LS & 5.560 & 5.645 & 6.014 & 6.081 & 5.650 & 5.815 & 7.008 & 7.325 \\
Huber & 5.508 & 5.529 & 5.531 & 5.540 & 5.480 & 5.533 & 5.482 & 5.507\\
Tukey & 5.503 & 5.518 & 5.526 & 5.529 & 5.468 & 5.510 & 5.471 & 5.483 \\
Exponential & 5.503 & 5.518 & 5.526 & 5.528 & 5.469 & 5.515 & 5.472 & 5.489 \\
WMS & 5.503 & 5.524 & 5.523 & 5.536 & 5.470 & 5.547 & 5.466 & 5.522 \\
\hline
\end{tabular}
\label{Tab2:outliers-rmse}
\end{table}

\begin{figure}[!htbp]
  \centering
  \includegraphics[width=8cm]{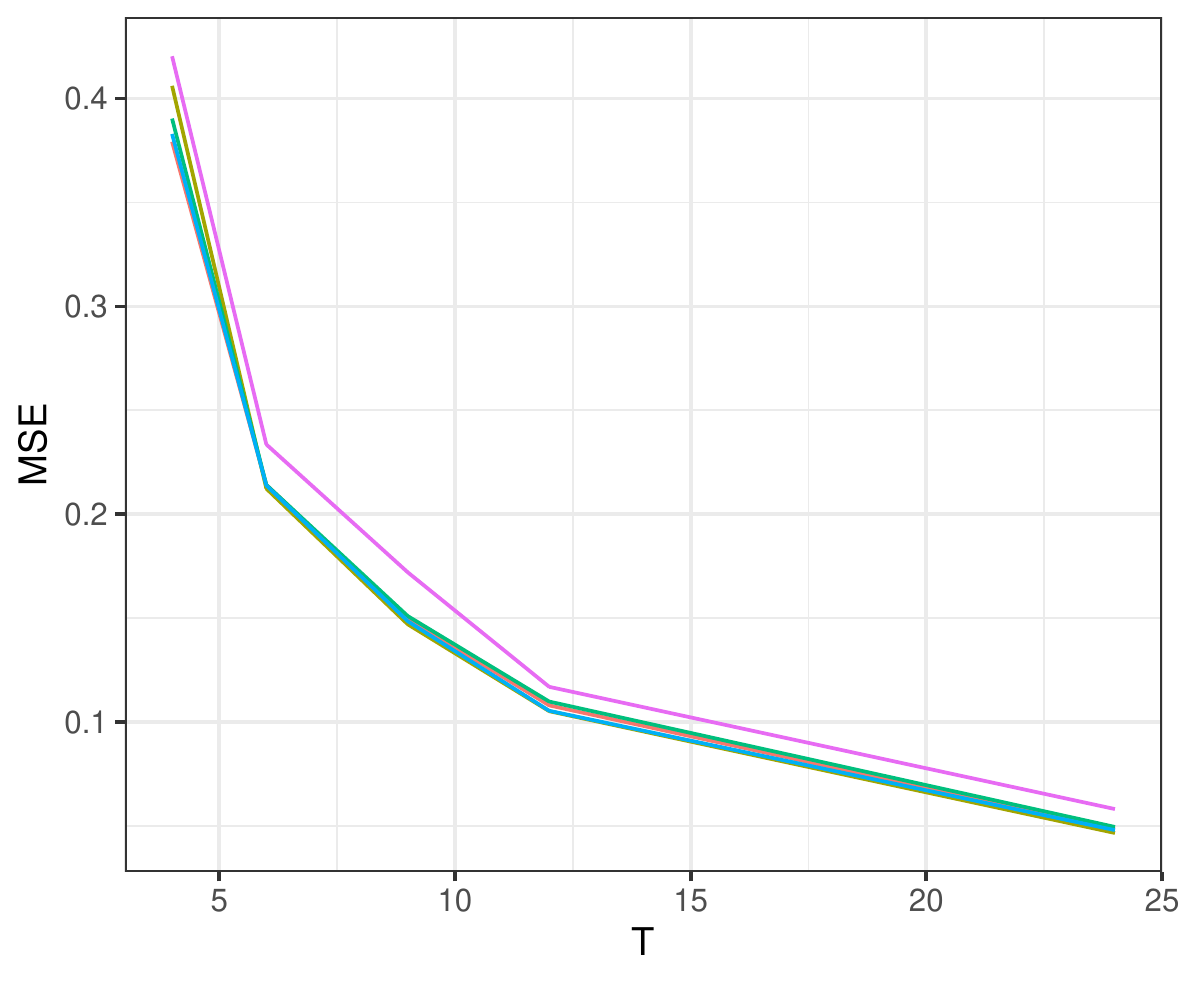}
  \includegraphics[width=8cm]{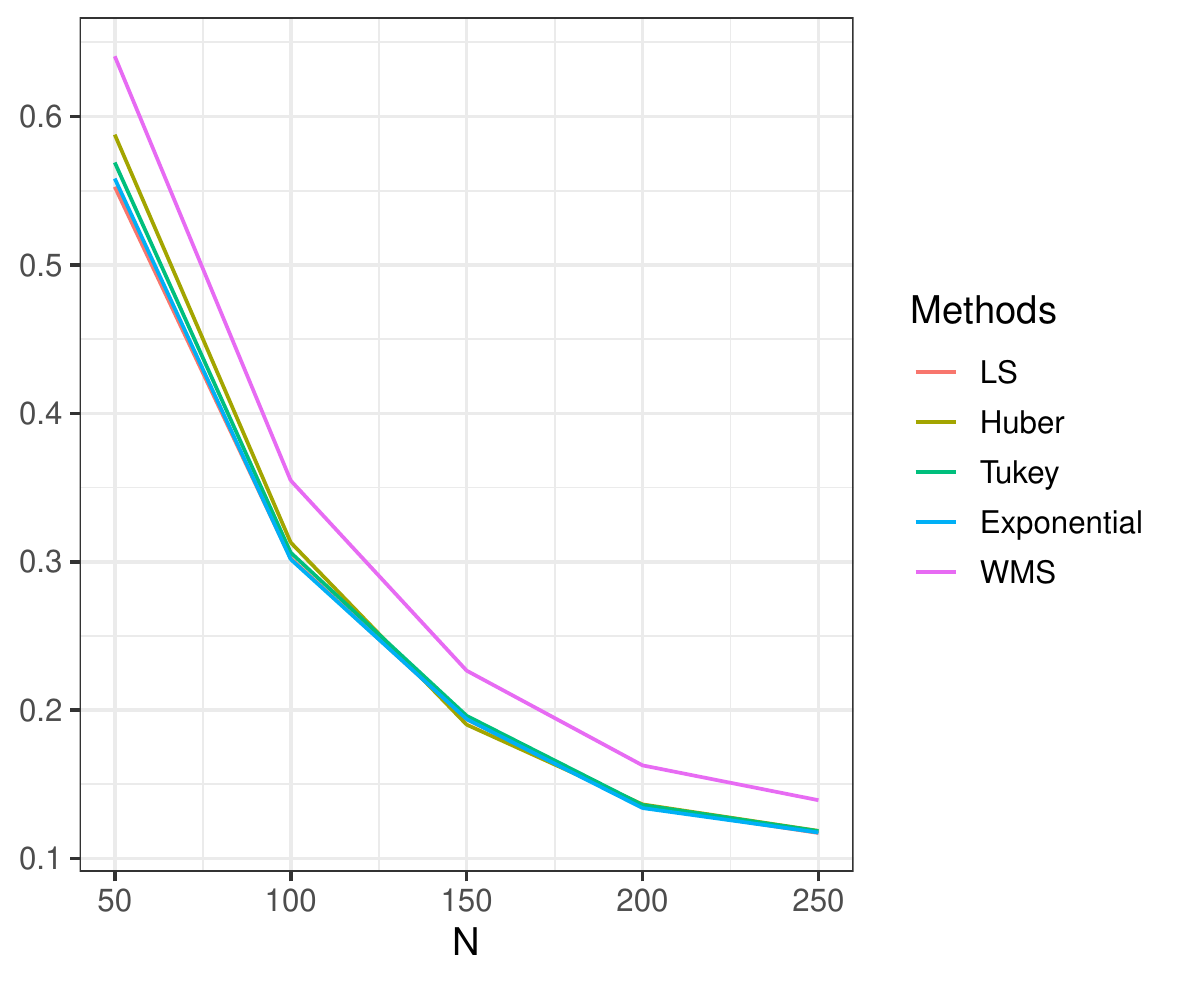}
  \caption{The MSEs of all estimators when $T = 4, 6, 9, 12, 24$ for fixed cross-sectional dimension $N = 50$ and $N = 50, 100, 150, 200, 250$ for fixed time dimension $T = 3$ and $\varepsilon_{it} \sim$ $\text{N}(\mu = 0, \sigma^2 = 1)$}
  \label{Fig:1}
\end{figure}

\begin{figure}[!htbp]
  \centering
  \includegraphics[width=5.2cm]{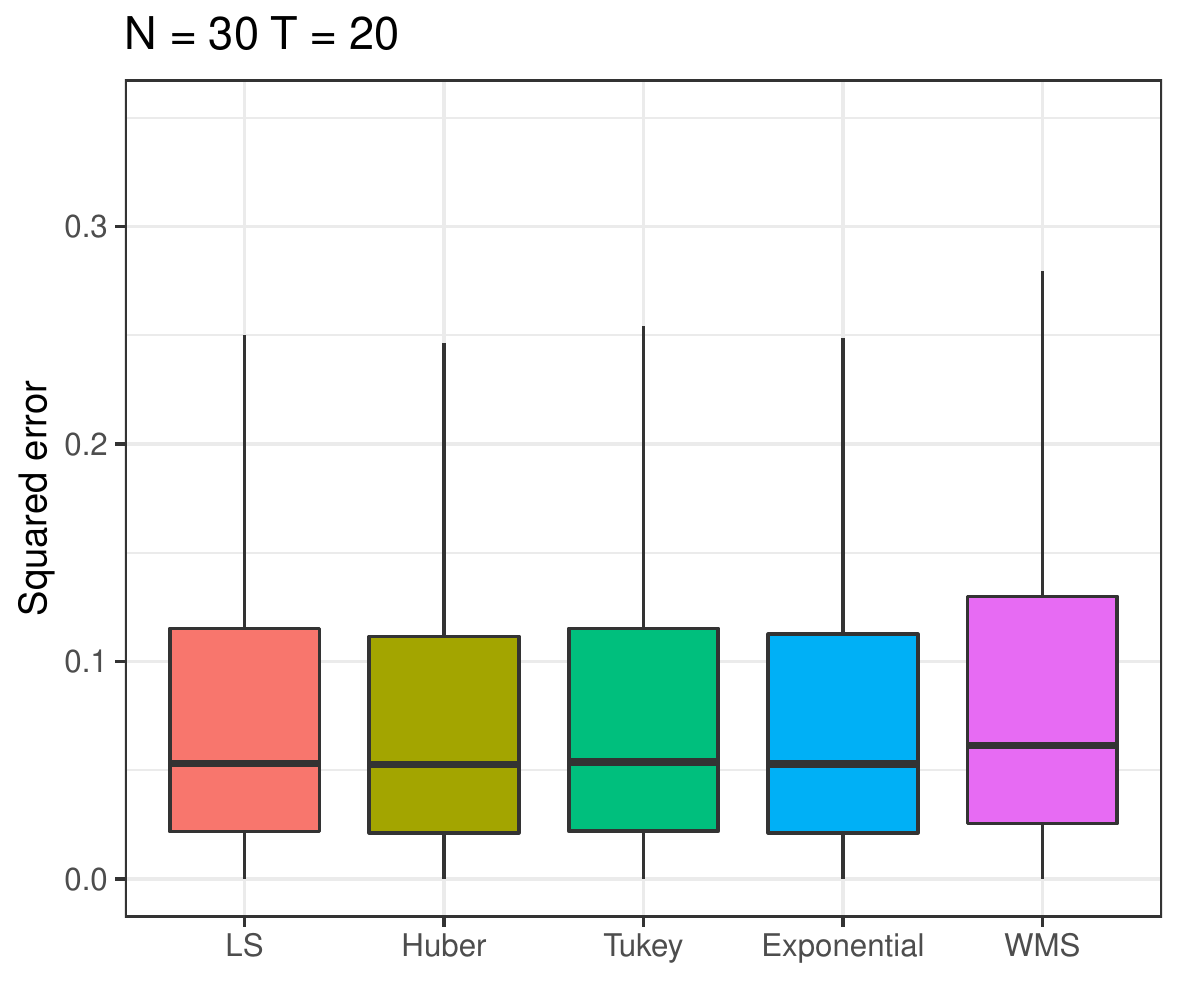}
  \includegraphics[width=5.2cm]{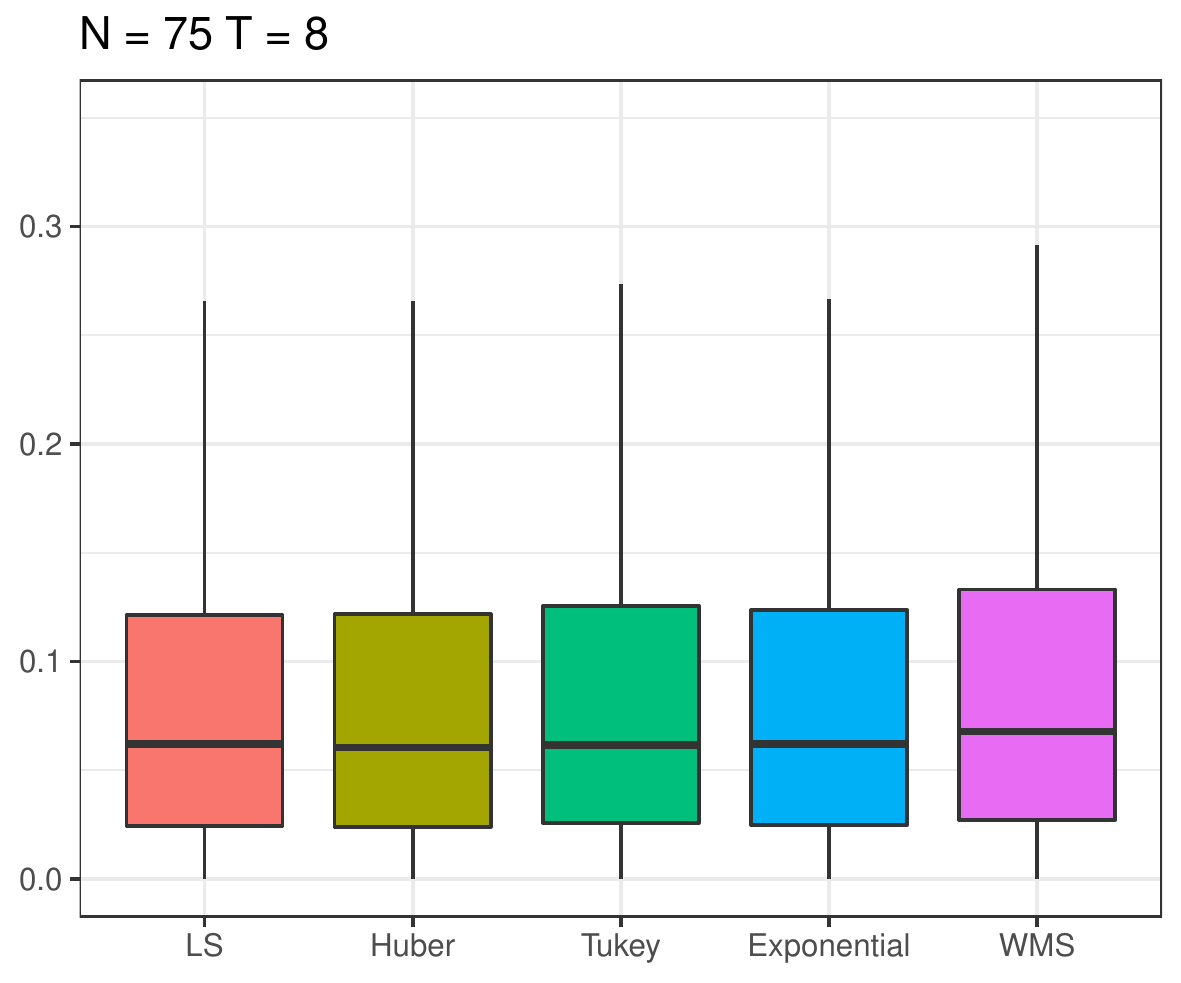}
  \includegraphics[width=5.2cm]{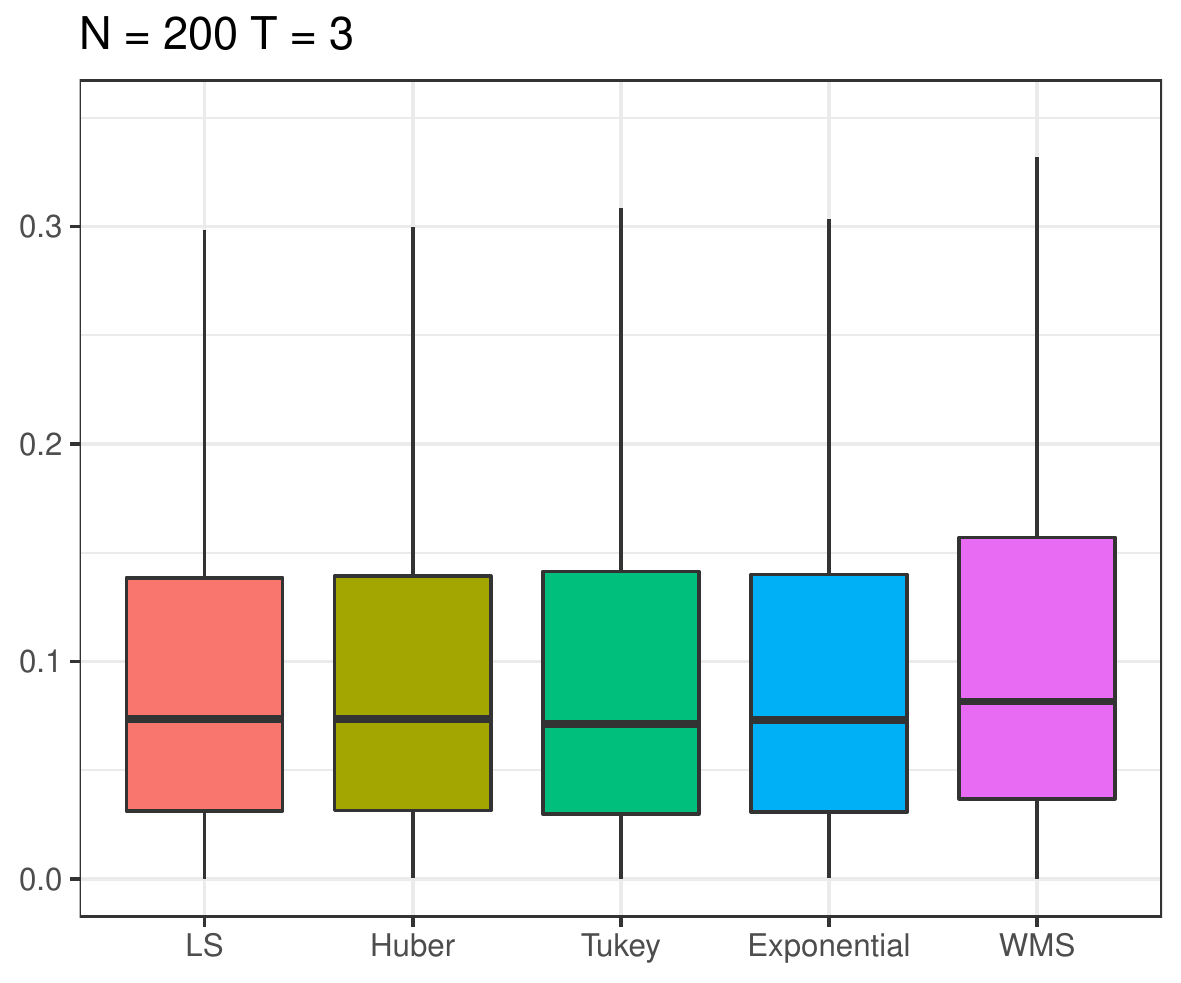}
  \\
  \includegraphics[width=5.2cm]{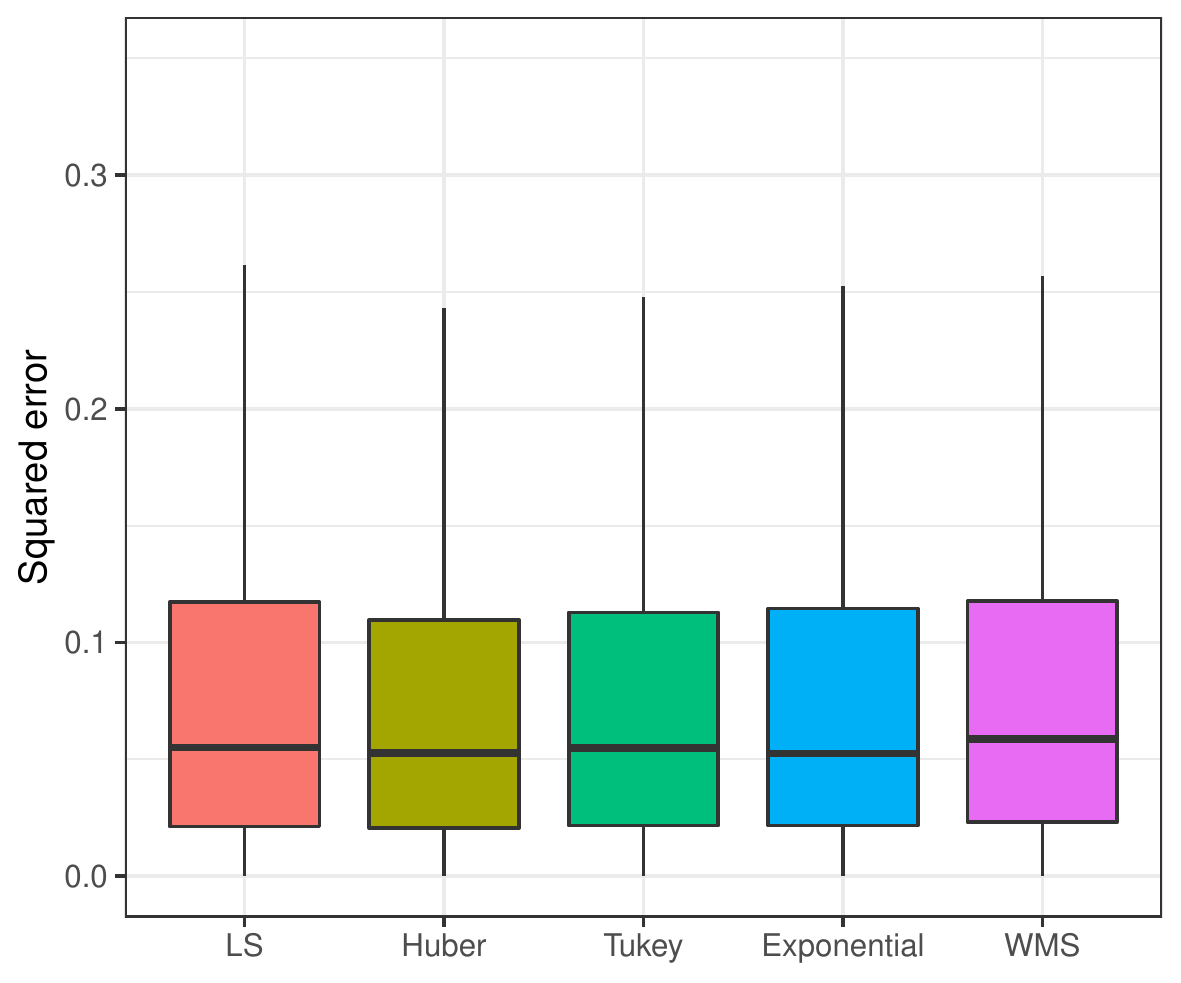}
  \includegraphics[width=5.2cm]{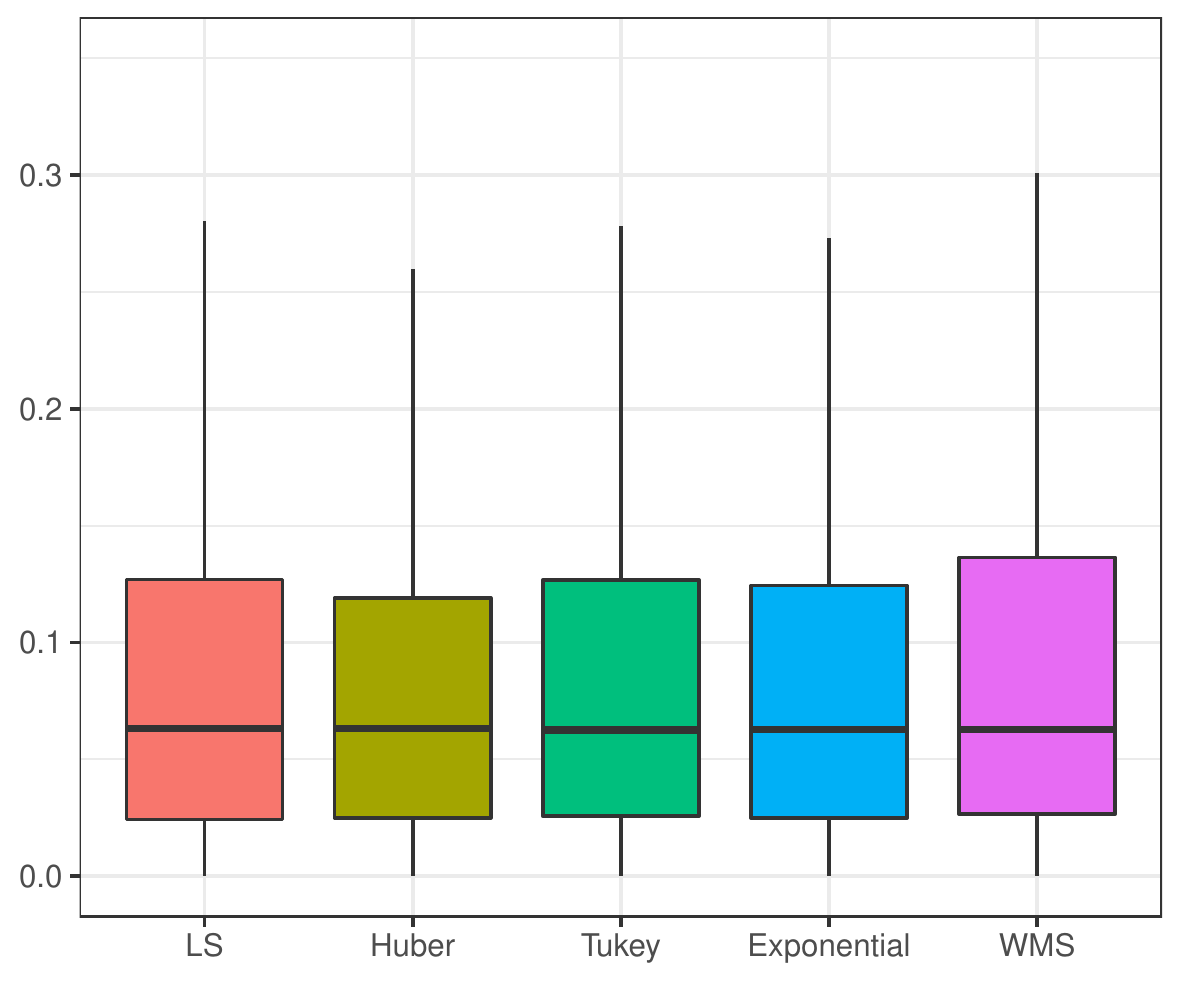}
  \includegraphics[width=5.2cm]{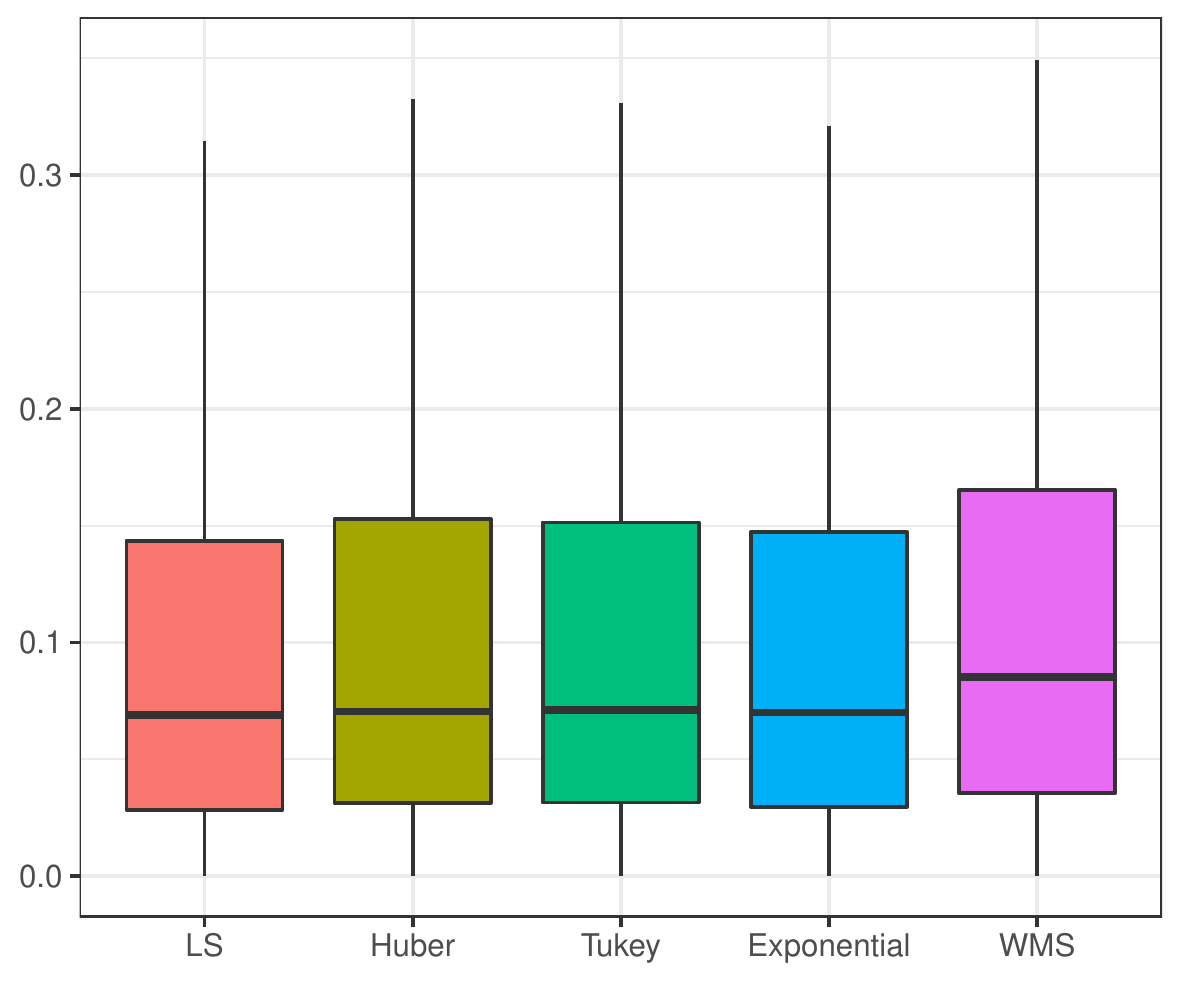}
  \\
  \includegraphics[width=5.2cm]{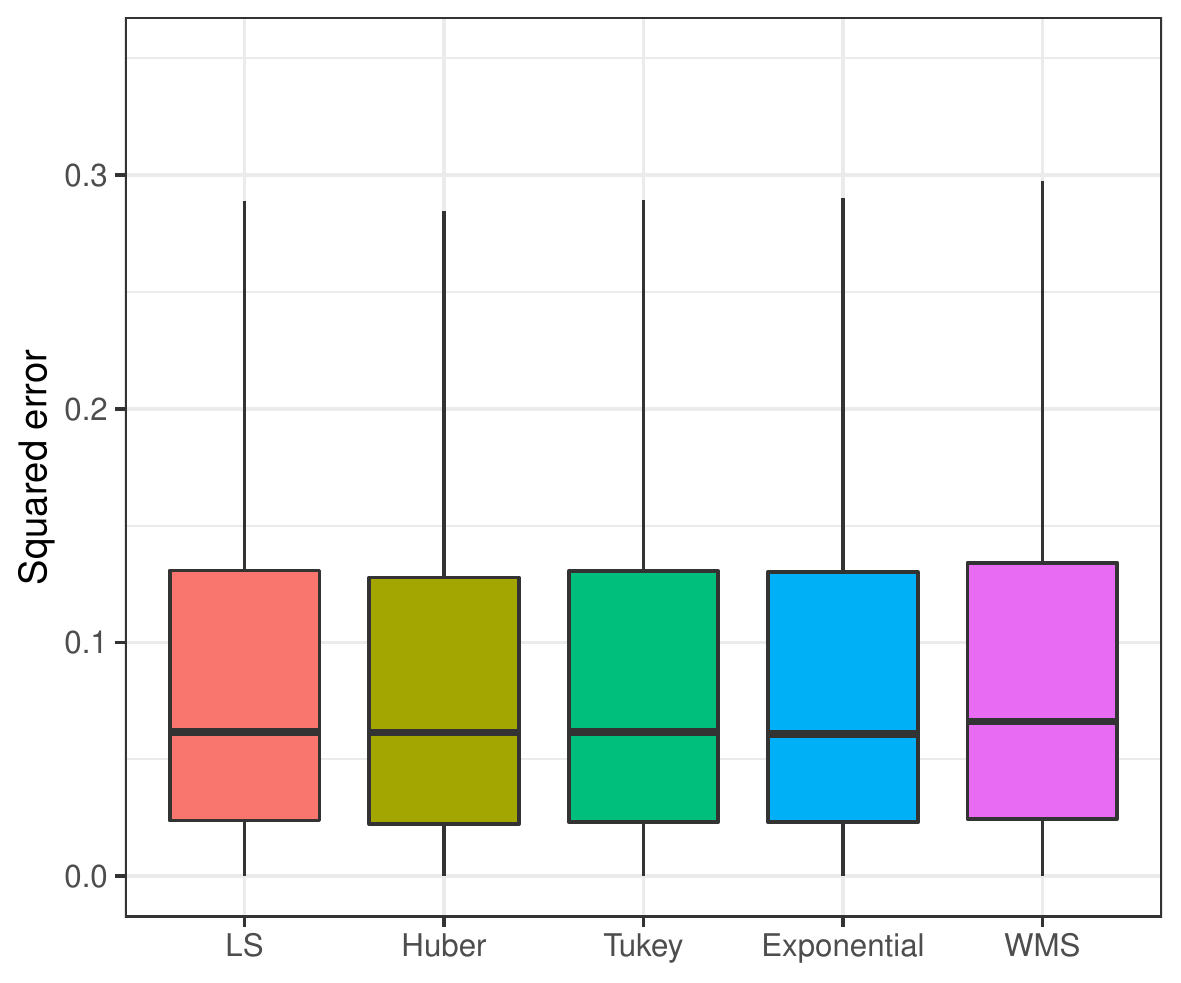}
  \includegraphics[width=5.2cm]{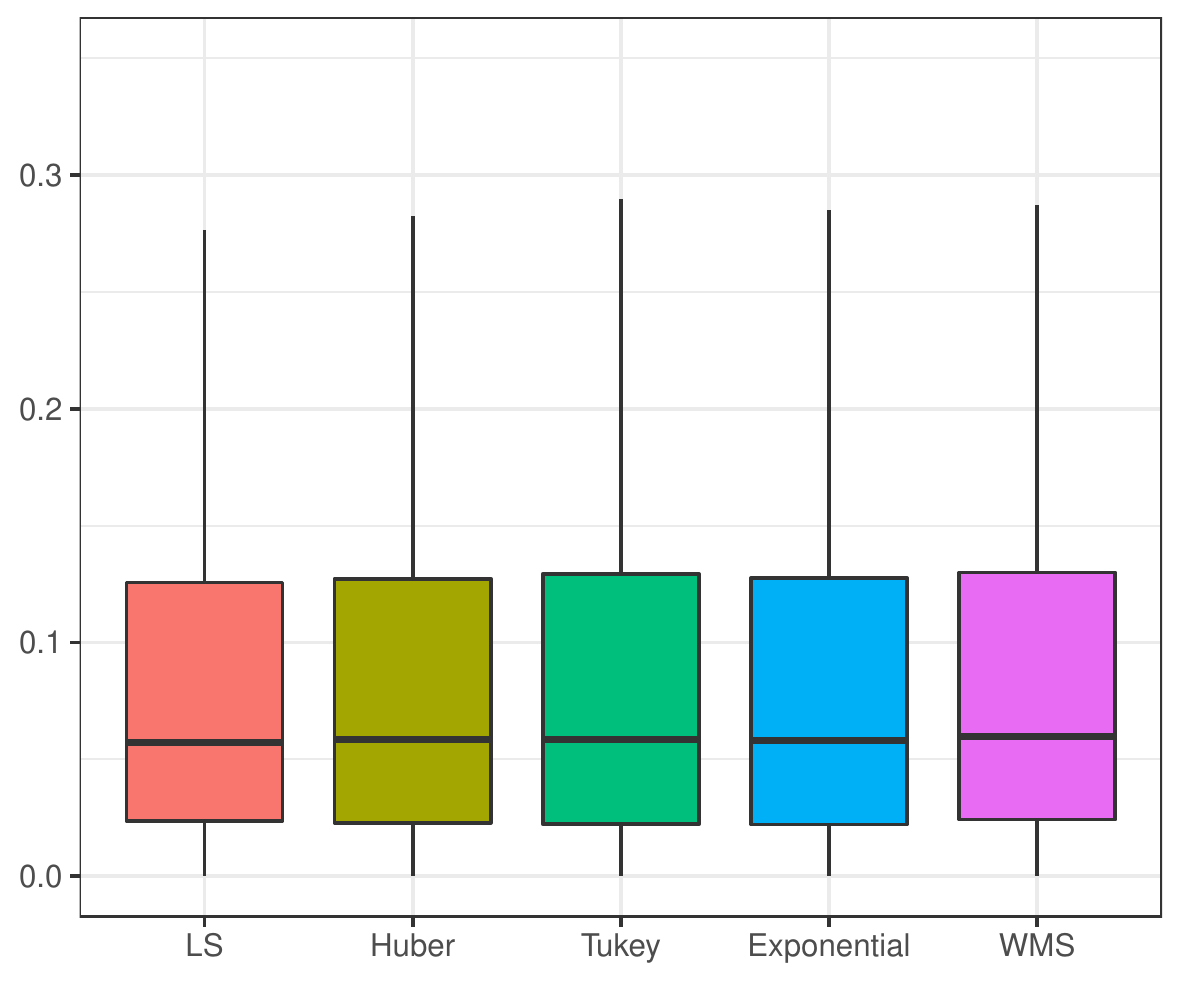}
  \includegraphics[width=5.2cm]{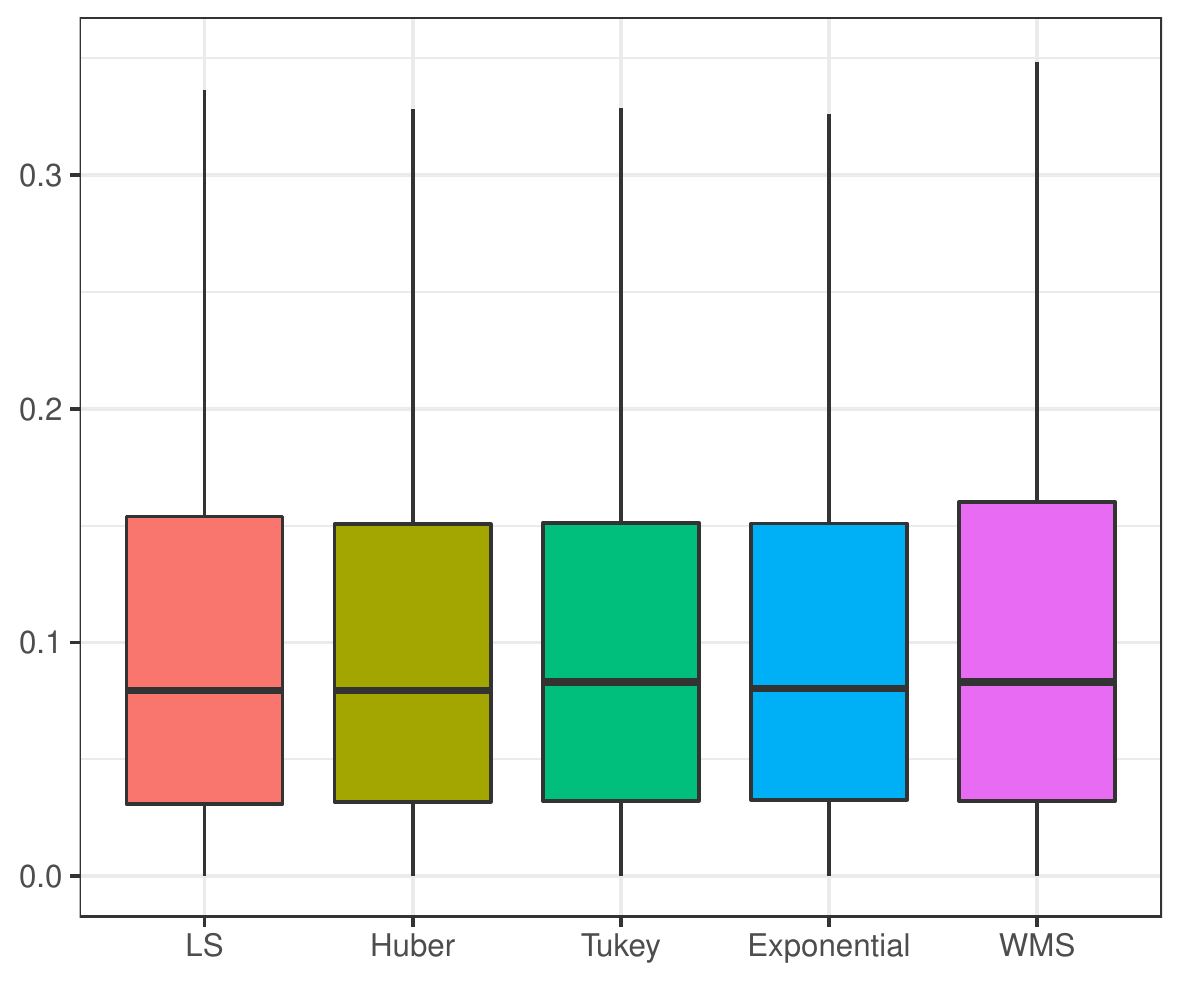}
  \\
  \includegraphics[width=5.2cm]{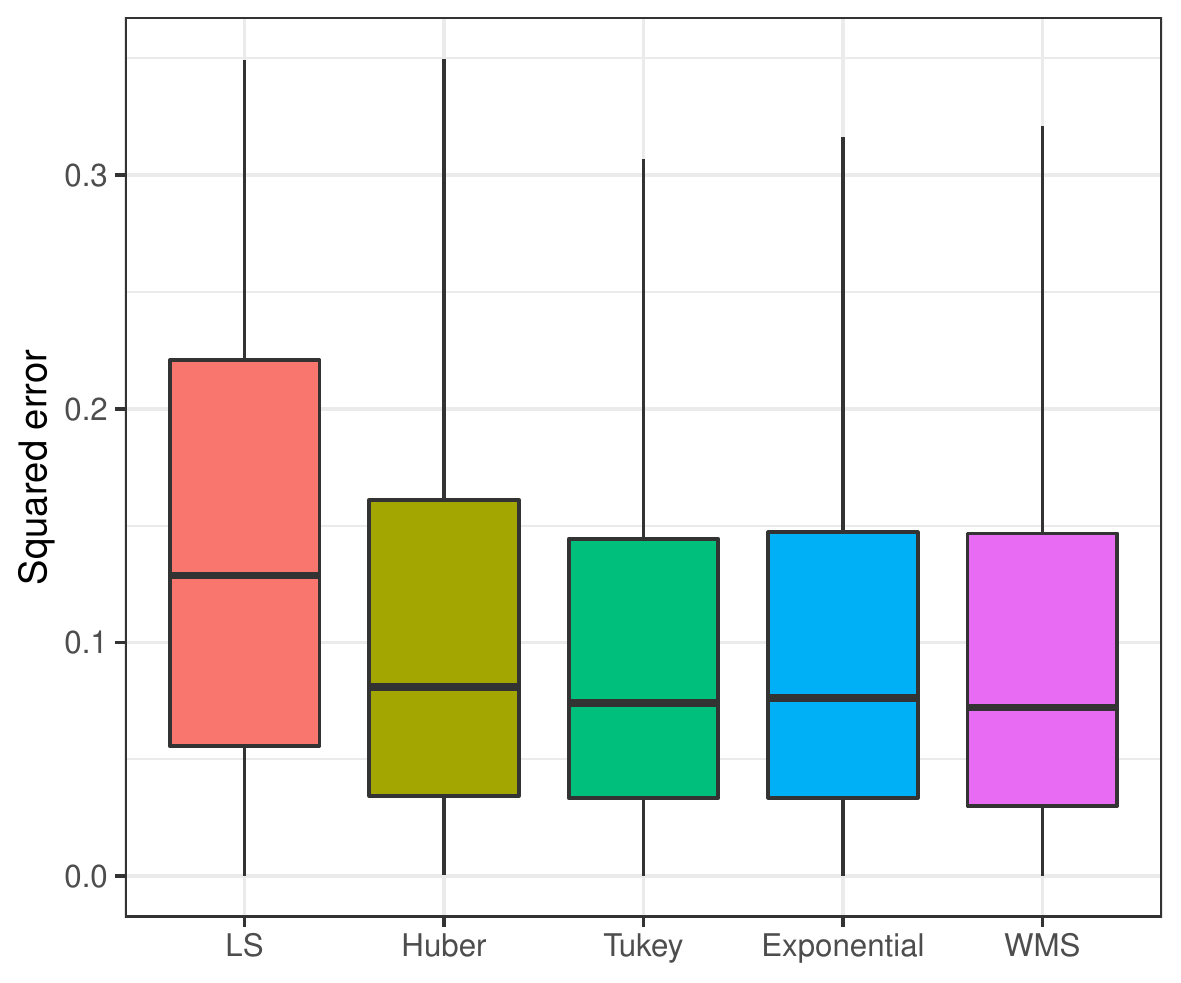}
  \includegraphics[width=5.2cm]{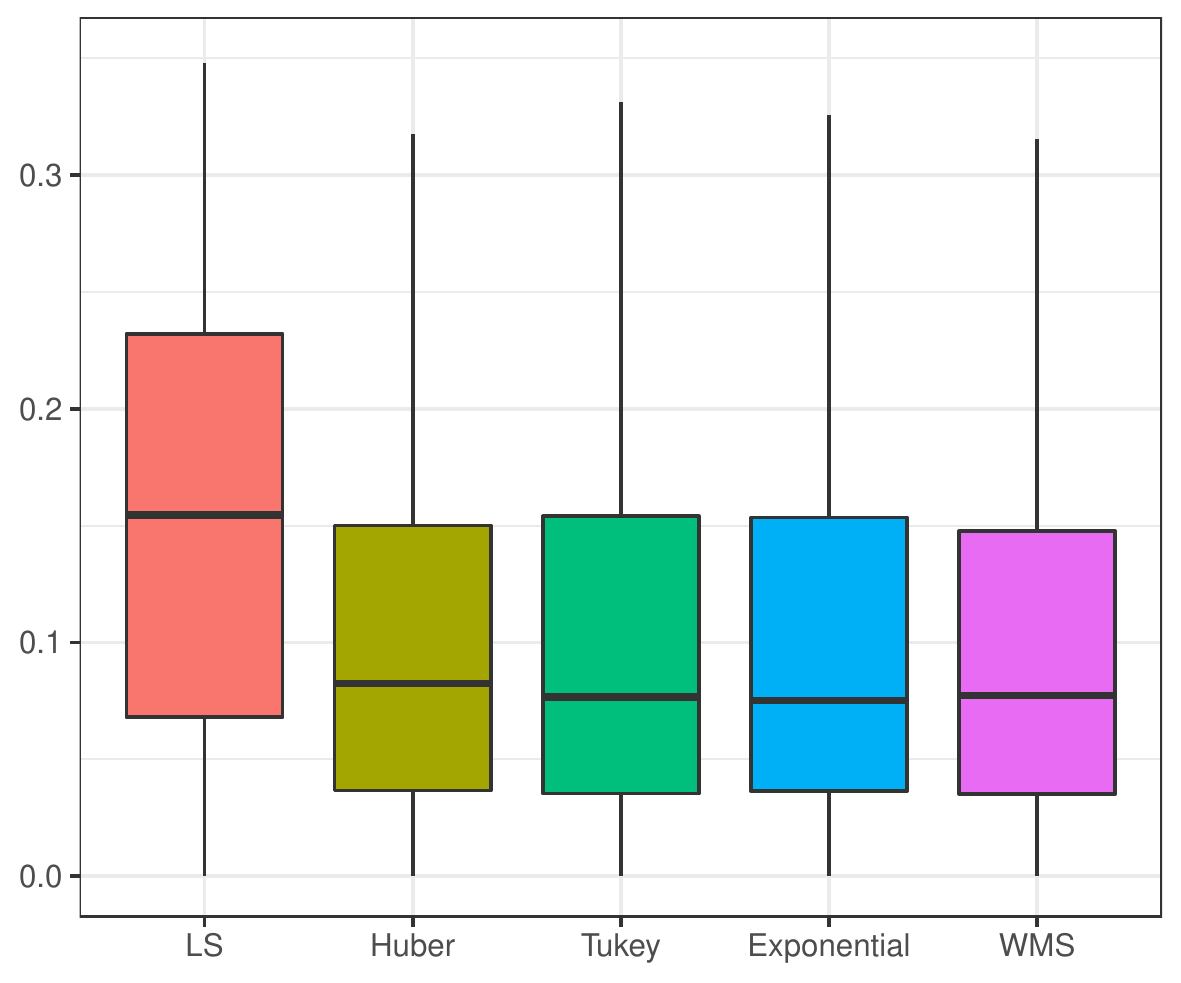}
  \includegraphics[width=5.2cm]{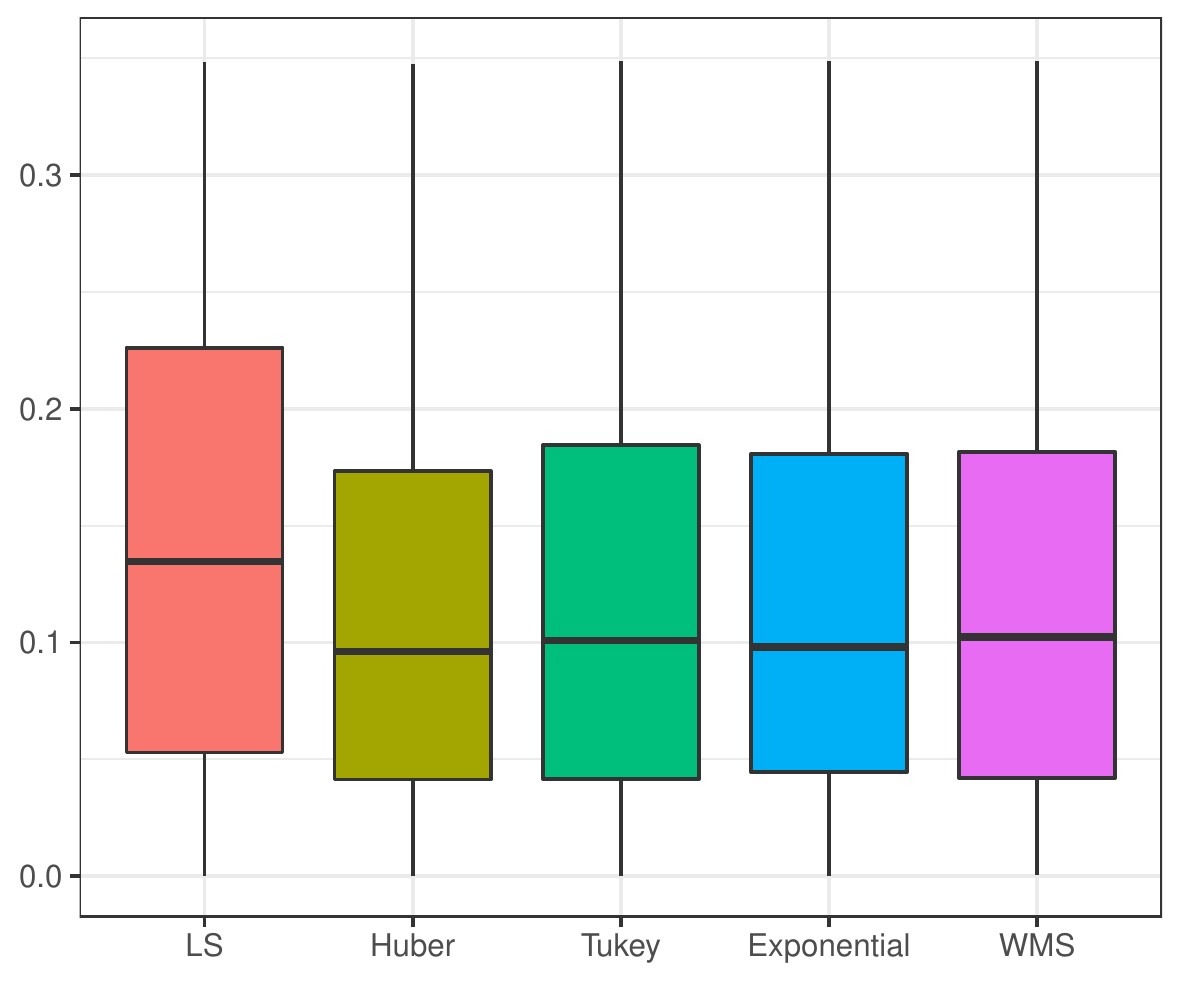}
  \caption{The boxplots of the SEs of all estimators under different error distributions when $(N_1, T_1)=(30,20)$, $(N_2, T_2)=(75, 8)$ and $(N_3, T_3)=(200, 3)$. First row: $N(0,1)$ distribution, second row: $t_5$ distribution, third row: $\chi^2_{(4)}$ distribution and fourth row: $Cauchy(0, 1)$ distribution.}
  \label{Fig:2}
\end{figure}

\begin{figure}[!htbp]
  \centering
  \includegraphics[width=5.2cm]{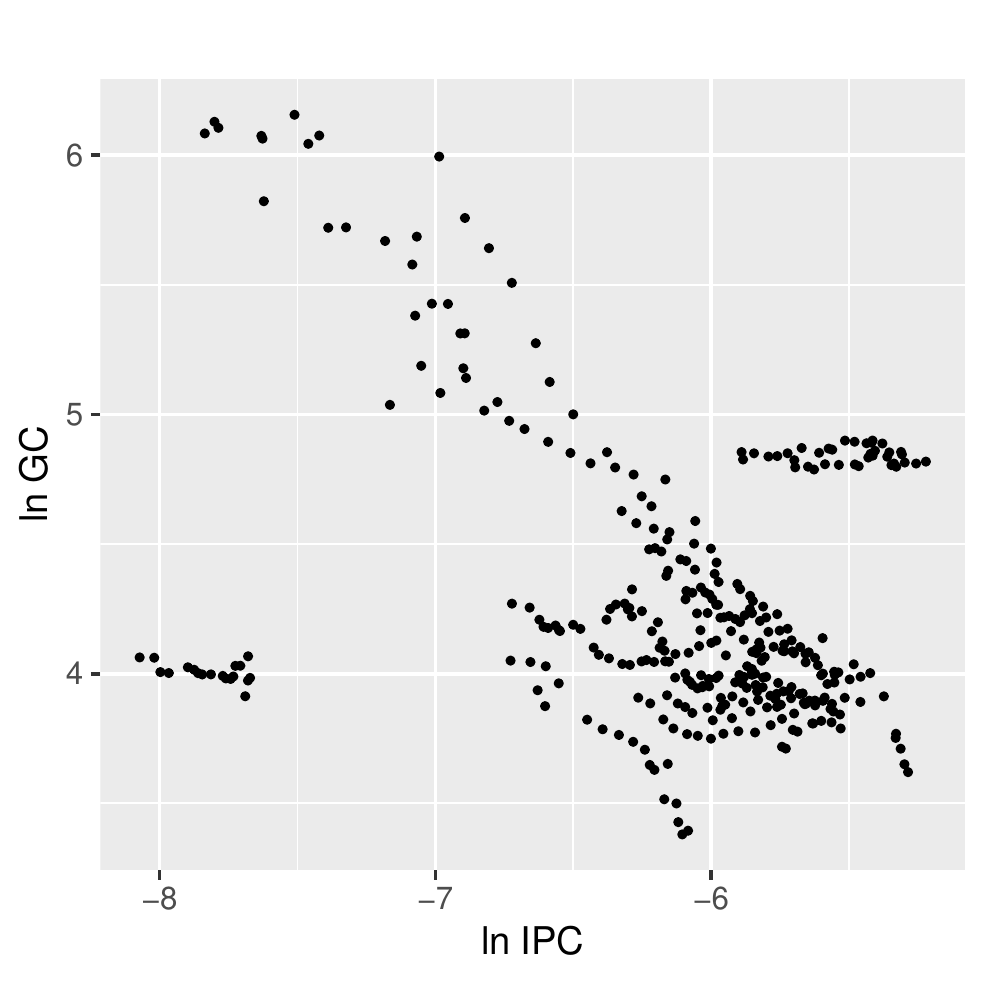}
  \includegraphics[width=5.2cm]{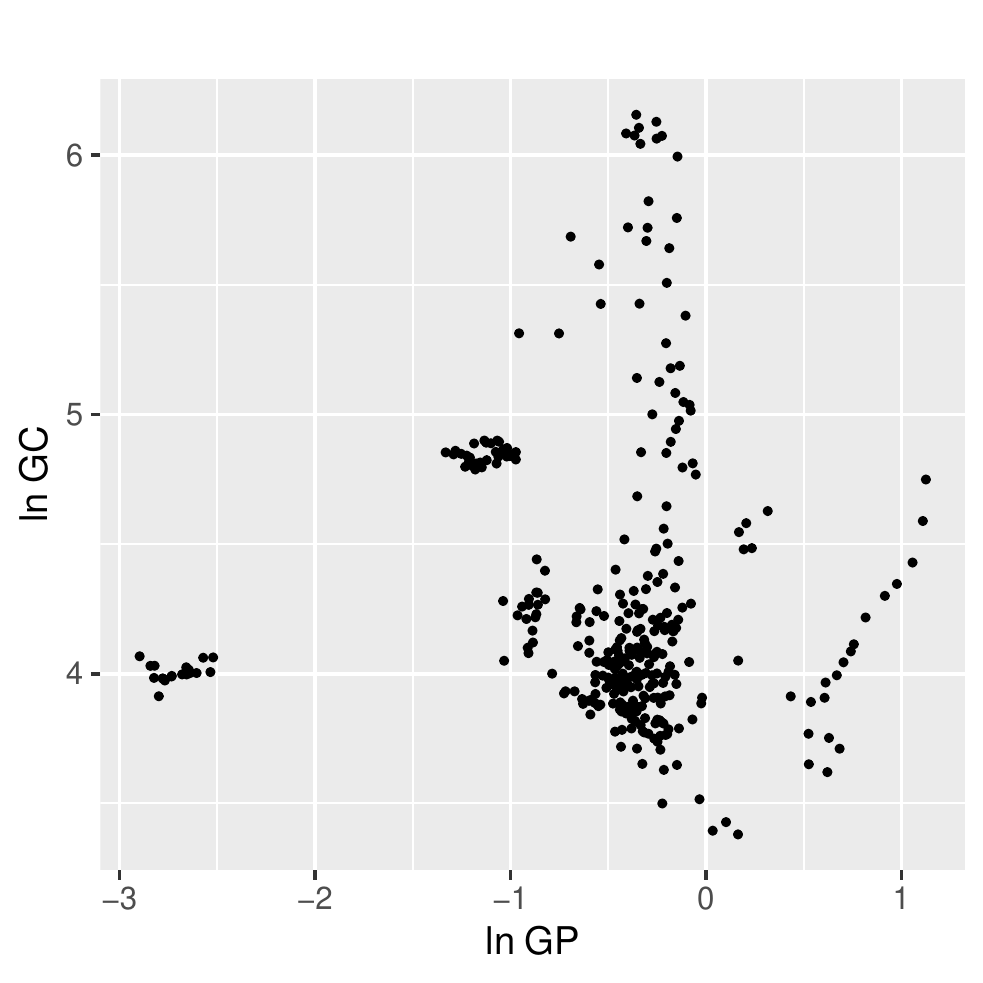} 
  \includegraphics[width=5.2cm]{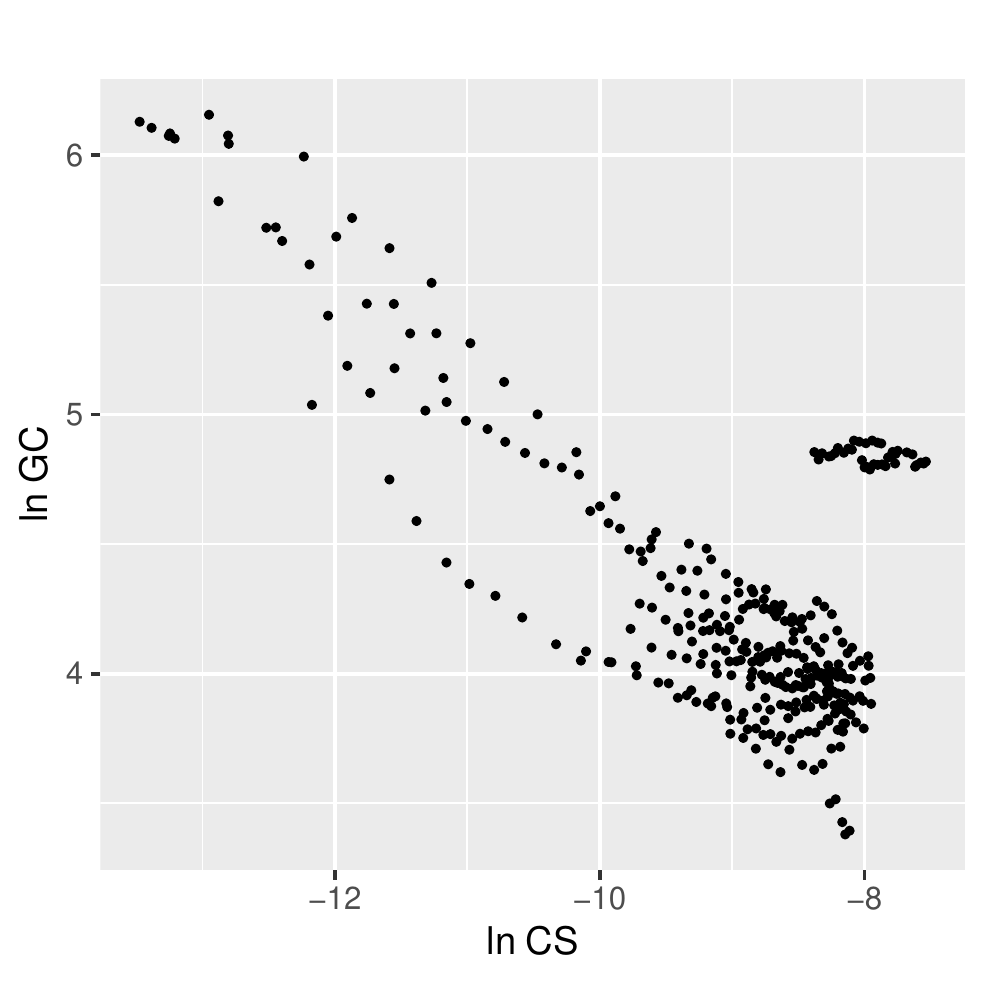} 
  \caption{Scatter plots of the logarithm of gasoline consumption ($\ln GC$) against the logarithms of real income ($\ln IPC$), real gasoline price ($\ln GP$) and car stock ($\ln CS$) for 18 OECD countries.}
  \label{Fig:3}
\end{figure}

\begin{figure}[!htbp]
  \centering
  \includegraphics[width=10cm]{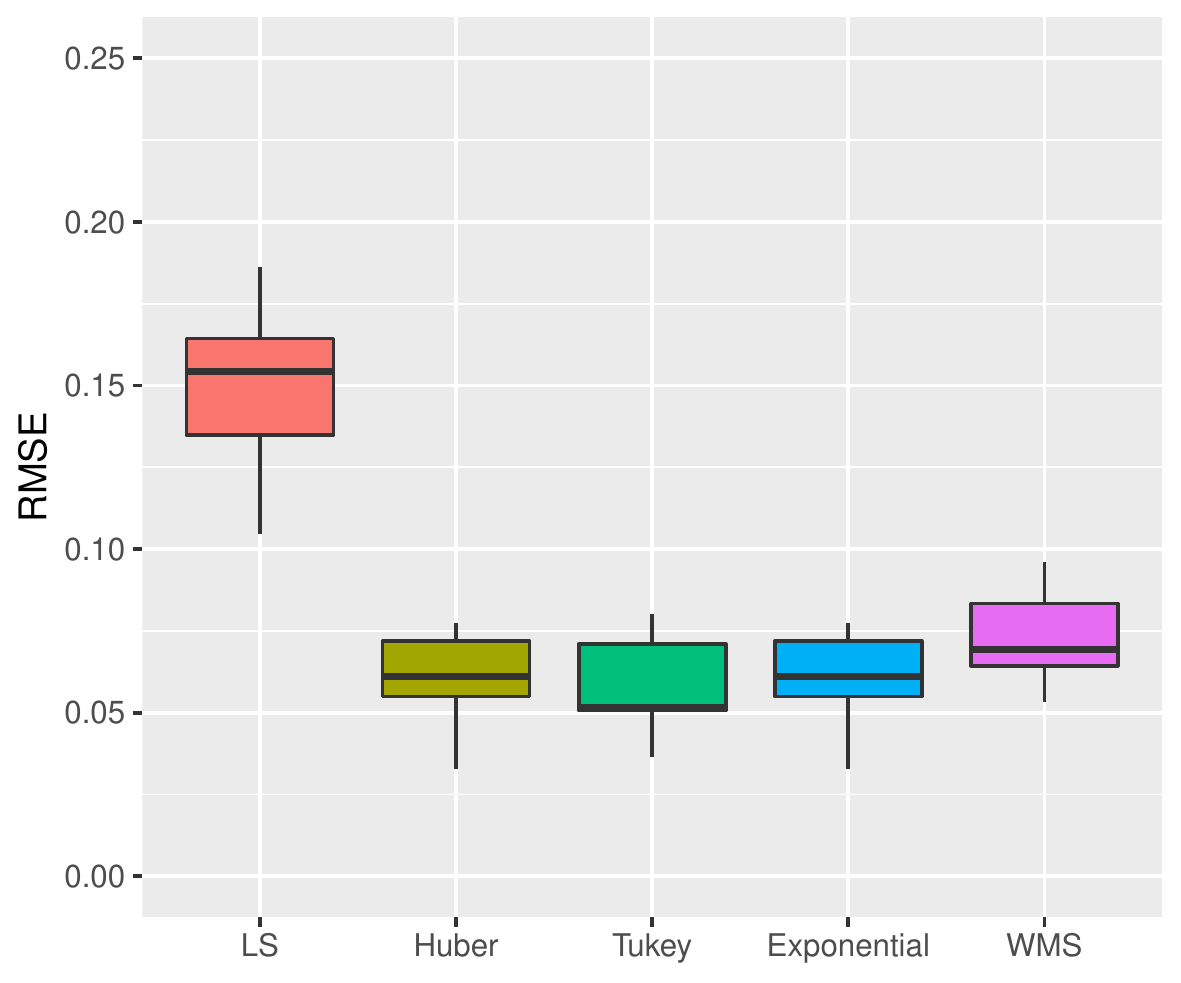}
  \caption{The boxplots of the RMSEs of predicted and observed values of the logarithm of the gasoline consumption for 3 OECD countries.}
  \label{Fig:4}
\end{figure}

\begin{table}
\centering
\caption{Estimates of individual coefficients (upper rows) and standard errors of the estimates (lower rows) for gasoline consumption data of 18 OECD countries over the period 1960-1978}
\scriptsize
\begin{tabular}{l c c c}
\cmidrule{1-4} 
Method & $\beta_1$ & $\beta_2$ & $\beta_3$ \\ 
\cmidrule{1-4} 
LS & -1.043 & -0.264 & 0.113 \\
 & (0.054) & (0.066) & (0.021) \\
Huber & 0.579 &  -0.317 & -0.559 \\
 & (0.048) & (0.029) & (0.019) \\
Tukey & 0.479 & -0.302 & -0.451 \\
 & (0.040) & (0.024) & (0.016) \\
Exponential & 0.482 & -0.307 & -0.452 \\
 & (0.027) & (0.018) & (0.012) \\
WMS & 0.902 & -0.356 & -0.651 \\
 & (0.017) & (0.014) & (0.011) \\
\cmidrule{1-4}  
\end{tabular}
\label{Tab3:gasoline_consumption} 
\end{table}

\end{document}